\documentclass[a4paper,twocolumn,11pt,unpublished]{quantumarticle}
\pdfoutput=1
\usepackage[T1]{fontenc}
\usepackage[utf8]{inputenc}
\usepackage[english]{babel}
\usepackage{microtype}
\usepackage{amsmath,amssymb,amsthm,mathtools,bm}
\usepackage{graphicx,booktabs,array,tabularx}
\usepackage[numbers,sort&compress]{natbib}
\usepackage[colorlinks=true,allcolors=quantumviolet]{hyperref}
\usepackage[nameinlink,noabbrev]{cleveref}
\makeatletter
\patchcmd{\@printtitle}{\@printtitletextwithappropriatefontsize}
  {\color{quantumviolet}\@printtitletextwithappropriatefontsize}{}
  {\PackageError{manuscript}{Unable to apply violet title styling}{}}
\makeatother
\hypersetup{pdftitle={Complete Detector Records and Contextual Source Laws in a Retrocausal Spin Model},pdfauthor={D.M. Theshan N. Weerasinghe}}
\usepackage{enumitem}
\setlist{nosep}
\newtheorem{theorem}{Theorem}[section]
\newtheorem{proposition}[theorem]{Proposition}
\newtheorem{corollary}[theorem]{Corollary}
\newtheorem{definition}[theorem]{Definition}
\newcommand{\Tr}{\operatorname{Tr}}
\newcommand{\sech}{\operatorname{sech}}
\newcommand{\TV}{\operatorname{TV}}
\newcommand{\dd}{\mathrm{d}}
\newcommand{\id}{\mathbb I}
\newcommand{\ket}[1]{\lvert #1\rangle}
\newcommand{\bra}[1]{\langle #1\rvert}

\newcommand{\E}{\mathbb E}
\numberwithin{equation}{section}
\title{Complete Detector Records and Contextual Source Laws in a Retrocausal Spin Model}
\author{D.M. Theshan N. Weerasinghe}
\email[\newline]{theshan.dissanayakemudiyanselage1@monash.edu}
\affiliation{Department of Electrical and Computer Systems Engineering, Faculty of Engineering, Monash University, Clayton, Australia}
\date{}
\begin{document}
\normalfont\selectfont
\maketitle
\begin{abstract}
Detector outcome probabilities can be independent of a later measurement setting even when earlier timestamps, detector seeds, or environmental records depend on that setting. We characterize complete-record preservation for a positive wrapped-Cauchy retrocausal spin source. An antipodal identity yields necessary and sufficient conditions for passive and environmental readouts, and source compensation permits a common normalized analyzer for source spins and prepared continuations. Gaussian controller and reset calculations distinguish inverse raw history weights from normalized thermal operations. A quantum instrument provides an operational benchmark, whereas a setting-dependent separable source produces setting-dependent statistics at an earlier joint probe. For two equal-width contexts, a calibrated weak probe requires a minimum source change of $\sech(2g)/8$ in trace distance for exact protection at every nonzero strength. The signal--source-change tradeoff is sharp for this pair of contexts. A contextual positive-history construction recovers the isolated hidden law and preserves records for a specified class of finite adaptive quantum apparatuses, subject to quantum calibration and restricted field access. Direct readout of an added likelihood field distinguishes two future axes with unit probability. The analytical results are supported by continuous-angle and quantum-trajectory simulations, deterministic quadrature, and exhaustive enumeration of a finite policy class. The analysis concerns the specified statistical models; their microscopic physical realization remains an open question.
\end{abstract}

\section{Introduction and scope}\label{sec:intro}
Retrocausal source descriptions permit variables associated with a preparation to depend on later measurement settings. This dependence need not appear in detector outcome marginals. In the anomalous-rotation model studied by Almada et al. \cite{almada}, antipodal symmetry protects the isolated bipartite outcome probabilities over a continuous range of propagation widths. More broadly, locally mediated formulations allow boundary data at different times to constrain a common history \cite{wharton,argaman}. The joint-outcome criterion of independence from later inputs is formulated in Ref.~\cite{argaman}; the present analysis specializes it to explicit source and apparatus extensions. An isolated source prescription leaves open which additional records can be acquired and archived before later controls are selected while maintaining a distribution independent of those controls.

A normalized readout can preserve the isolated outcome probabilities after marginalization while its own distribution carries information about a future setting. Likewise, leaving the value of an archive fixed along a history does not preserve its probability when a later intervention reweights histories. Record consistency therefore concerns the joint probability measure of all accessible earlier outputs. The accessible observables form part of the apparatus specification, since classical copies of hidden variables and quantum measurements of an environment need not define equivalent extensions.

This study establishes a source-specific antipodal characterization, distinguishes averaged detector kernels from their recorded realizations, analyzes controller normalization, derives a sharp calibrated-probe tradeoff, and constructs a conditional contextual embedding with an explicit obstruction to unrestricted field access. The analysis uses established results on competing rates, Gaussian partition functions, trace-preserving quantum instruments, and conditional reweighting \cite{maes,seifert,jarzynski,horodecki,stinespring}. The definitions, source law, detector constructions, proofs, and numerical procedures are presented below.

The results apply under distinct assumptions. Passive and environmental theorems assume the stated source law is unchanged when the readout is appended, and necessity uses a continuum of future axes. The calibrated-probe theorem allows the source to change, but assumes that an input density operator with fixed probe effects predicts the actual earlier probabilities. The contextual construction replaces that last assumption: its quantum calibration tensor determines records, while hidden source mixtures need not be operational input states. Its domain contains one preparation, one or two outgoing arms, local apparatus sequences, and causal classical feed-forward; it excludes recombination and arbitrary multiple-source networks. These assumptions specify the domain of each result.

\section{Source law and protection of complete records}\label{sec:records}
\subsection{Positive source weights and the record criterion}
Let $\lambda\in\mathbb T=\mathbb R/(2\pi\mathbb Z)$ be a source angle; Alice and Bob choose axes $x,y$, obtain outcomes $a,b\in\{+1,-1\}$, and have fixed finite widths $\gamma_A,\gamma_B>0$. Define
\begin{align}
 p_\gamma(t)&=\frac{\sinh\gamma}{2\pi(\cosh\gamma-\cos t)},\label{eq:kernel}\\
 \theta_{z,c}&=z+\pi\mathbf1_{c=-1},\\
 A_a(\lambda)&=p_{\gamma_A}(\theta_{x,a}-\lambda),\\
 B_{y,b}(\lambda)&=p_{\gamma_B}(\theta_{y,b}-\lambda-\pi),\\
 R_A(\lambda)&=\sum_a A_a(\lambda),\\
 R_B(\lambda;y)&=\sum_b B_{y,b}(\lambda),\\
 Z_{xy}&=\int_0^{2\pi}R_A(\lambda)R_B(\lambda;y)\dd\lambda.
\end{align}
The shift $\pi$ describes opposite source spins. Integrals use Lebesgue angle measure unless stated otherwise; a common uniform-source factor cancels in normalized probabilities. Both $R_A$ and $R_B$ are positive and $\pi$-periodic. The isolated hidden joint law is
\begin{equation}
 P_0(\dd\lambda,a,b\mid x,y)=\frac{A_a(\lambda)B_{y,b}(\lambda)}{Z_{xy}}\dd\lambda.\label{eq:original}
\end{equation}
The Fourier series $p_\gamma(t)=(2\pi)^{-1}\sum_{n\in\mathbb Z}e^{-|n|\gamma}e^{int}$ shows that convolution adds widths. With $\Gamma=\gamma_A+\gamma_B$,
\begin{align}
 \int A_aB_{y,b}\dd\lambda&=\frac{\sinh\Gamma}{2\pi}\frac1{\cosh\Gamma+ab\cos(x-y)},\\
 \widehat W^{ab}_{xy}&=\frac1{\cosh\Gamma+ab\cos(x-y)},\\
 \widehat Z_{xy}&=\frac{4\cosh\Gamma}{\cosh^2\Gamma-\cos^2(x-y)},\\
 Z_{xy}&=\frac{\sinh\Gamma}{2\pi}\widehat Z_{xy},\\
 P_0(a,b\mid x,y)&=\frac14[1-ab\sech\Gamma\cos(x-y)].\label{eq:pair}
\end{align}
The outcomes separately have probability $1/2$, even though the conditional hidden density
\begin{equation}
 \mu_{x,a,y}(\lambda)=\frac{2A_a(\lambda)R_B(\lambda;y)}{Z_{xy}}\label{eq:conditional}
\end{equation}
generally depends on $y$. For the equal-width source, the individual width $g$ here corresponds to the single-arm parameter of Ref.~\cite{almada}; its effective pair width is $2g$. Allowing unequal positive widths is the direct convolution generalization. The apparatus extensions considered below supplement this source law with specified readout and dynamical prescriptions.

\begin{definition}[Record consistency]
An earlier record $h$ is the tuple of all jointly accessible outputs retained at a specified earlier cut, possibly including an outcome, time, detector seed, and environmental or reset outputs. A family of later interventions preserves this record when its probability measure is independent of the later intervention. For a causal adaptive choice $u=f(h)$, the same earlier measure must result after summing all later outputs. The record includes all jointly accessible outputs of the specified apparatus and does not assign simultaneous values to incompatible quantum observables.
\end{definition}
For a finite or countable supported record distribution $p(h)>0$, raw continuation mass $L(h,u)>0$, and scalar correction $C(h,u)>0$, global normalization gives
\begin{equation}
 p'_u(h)=\frac{p(h)C(h,u)L(h,u)}{\sum_{h'}p(h')C(h',u)L(h',u)}.\label{eq:scalar}
\end{equation}
At fixed $u$, protection holds exactly when $CL$ is independent of $h$. If at least two branches have independently selectable controls and every such policy is admitted, comparison with a held-fixed second branch forces one common value across allowed branch-control pairs. The sufficient form is $C=L_{\rm ref}/L$ for a single positive reference. It protects branch mass; arbitrary additional readouts need separate analysis.

\subsection{Passive readouts, clocks, and adaptive settings}
Let $L_{x,a}(\dd r\mid\lambda)$ be a normalized Markov kernel on a standard Borel readout space, with no explicit dependence on $y$. ``Passive'' means that appending it otherwise leaves \cref{eq:original} unchanged. Its earlier law is
\begin{equation}
 P(\dd r\mid x,a,y)=\int\mu_{x,a,y}(\lambda)L_{x,a}(\dd r\mid\lambda)\dd\lambda.\label{eq:readout}
\end{equation}
Normalization only ensures recovery of the original law after $r$ is ignored.

\begin{theorem}[Antipodal characterization]\label{thm:passive}
Fix $x,a$ and positive finite widths. Write $A(\lambda)=A_a(\lambda)$ and $A_\pi(\lambda)=A(\lambda+\pi)$. The measure in \cref{eq:readout} is independent of every $y\in[0,\pi)$ if and only if there is a probability measure $\ell$ such that, almost everywhere in $\lambda$,
\begin{equation}
 \begin{aligned}
 &A(\lambda)L(\dd r\mid\lambda)\\
 &\quad+A_\pi(\lambda)L(\dd r\mid\lambda+\pi)\\
 &\qquad=[A(\lambda)+A_\pi(\lambda)]\ell(\dd r).
 \end{aligned}\label{eq:antipodal}
\end{equation}
\end{theorem}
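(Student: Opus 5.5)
We reduce the readout law \cref{eq:readout} to an integral of the antipodally folded measure against one $\pi$-periodic function of $\lambda$, and then use the Fourier structure of $R_B(\cdot;y)$ as $y$ ranges over $[0,\pi)$.

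\emph{Step 1 (folding).} Since $R_B(\lambda;y)$ is $\pi$-periodic, split $\int_0^{2\pi}$ into $\int_0^{\pi}$ plus the same integral with $\lambda\mapsto\lambda+\pi$. This gives
\begin{equation*}
 P(\dd r\mid x,a,y)=\frac{2}{Z_{xy}}\int_0^{\pi}R_B(\lambda;y)\,M(\dd r\mid\lambda)\,\dd\lambda ,
\end{equation*}
where
\begin{equation*}
 M(\dd r\mid\lambda)=A(\lambda)L(\dd r\mid\lambda)+A_\pi(\lambda)L(\dd r\mid\lambda+\pi).
\end{equation*}
Note also that $A+A_\pi=R_A$.

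\emph{Sufficiency.} If \cref{eq:antipodal} holds, the same folding in reverse gives
\begin{equation*}
 P(\dd r\mid x,a,y)=\ell(\dd r)\,\frac{2\int A_aR_B\,\dd\lambda}{Z_{xy}} .
\end{equation*}
By \cref{eq:pair}, the ratio equals $2P_0(a\mid x,y)=1$. Hence $P=\ell$, independent of $y$.

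\emph{Necessity.} Fix a measurable set $E$ and define the $\pi$-periodic function
\begin{equation*}
 f_E(\lambda)=\frac{M(E\mid\lambda)}{R_A(\lambda)}\in[0,1].
\end{equation*}
Then
\begin{equation*}
 P(E\mid y)=\frac{2}{Z_{xy}}\int_0^{\pi}R_A(\lambda)R_B(\lambda;y)f_E(\lambda)\,\dd\lambda .
\end{equation*}
Set $c_E=P(E\mid y_0)$ for some fixed $y_0$. Independence of $y$ is then equivalent to
\begin{equation*}
 \int_0^\pi R_AR_B(\cdot;y)\,(f_E-c_E)\,\dd\lambda=0 \quad\text{for all } y,
\end{equation*}
because $Z_{xy}=\int_0^\pi 2R_AR_B(\cdot;y)$ (up to the folding factor). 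So $h=R_A(f_E-c_E)$ is a bounded $\pi$-periodic function orthogonal to every translate $R_B(\cdot;y)$.

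Compute $R_B$ from the Fourier series of $p_{\gamma_B}$. The sum of the two antipodal kernels keeps only the even harmonics:
\begin{equation*}
 R_B(\lambda;y)=\pi^{-1}\sum_{m}e^{-2|m|\gamma_B}e^{2im(y-\lambda)}\quad\text{(up to sign conventions)},
\end{equation*}
and it depends on $\lambda-y$ only, with every coefficient nonzero. The convolution $h*R_B$ therefore vanishes for all $y\in[0,\pi)$, which covers a full period of the $\pi$-periodic problem. Its $2m$-th Fourier coefficient is $\hat h(2m)\,e^{-2|m|\gamma_B}$, so $\hat h(2m)=0$ for all $m$. Since $h$ is $\pi$-periodic, this gives $h=0$ a.e., i.e.\ $f_E=c_E$ a.e.

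\emph{Main obstacle: measure-valued upgrade.} The step above gives one null set per $E$. To pass to $M(\cdot\mid\lambda)=R_A(\lambda)\ell(\cdot)$ with $\ell(E)=c_E$, use that the readout space is standard Borel. Take a countable $\pi$-system generating the $\sigma$-algebra, union the countably many null sets, and apply the $\pi$–$\lambda$ theorem to extend equality of the two finite measures $M(\cdot\mid\lambda)$ and $R_A(\lambda)\ell$ to all Borel sets, for almost every $\lambda$.

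Finally, $\ell=P(\cdot\mid y_0)$ is a probability measure. It is countably additive because it is a mixture of the kernels $L$, and its total mass is $1$ since $M(\text{whole space}\mid\lambda)=R_A(\lambda)$. Measurability of $\lambda\mapsto L(E\mid\lambda)$, and hence of $f_E$, comes from $L$ being a Markov kernel.

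The nonvanishing of every Fourier coefficient of $R_B$ for finite $\gamma_B$ is exactly where positivity and finiteness of the width enter; if a coefficient vanished, the corresponding harmonic of $h$ would be left undetermined.
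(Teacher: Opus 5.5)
Your proposal is correct and follows essentially the same route as the paper: for each event $E$ you reduce protection to orthogonality against all translates of $R_B$, and use that their even Fourier coefficients are proportional to $e^{-2|m|\gamma_B}\neq0$ to force the $\pi$-periodization $A\,L(E\mid\cdot)+A_\pi L(E\mid\cdot+\pi)-R_A\,\ell(E)$ to vanish almost everywhere. You fold first, while the paper reads off the even harmonics of the unfolded function, which is equivalent; you then pass from events to measures via a countable generating class on the standard Borel space and prove sufficiency by antipodal pairing, exactly as the paper does.
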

\begin{proof}
For each readout event $E$, protection is equivalent to
the condition
\begin{equation}
 \int A(\lambda)[L(E\mid\lambda)-\ell(E)]R_B(\lambda;y)\dd\lambda=0
\end{equation}
for every translate $y$. The translates of $R_B$ have nonzero coefficients at every even Fourier harmonic, proportional to $e^{-2|n|\gamma_B}$. Thus all even coefficients of the integrable function $A[L(E\mid\lambda)-\ell(E)]$ vanish. Uniqueness of Fourier coefficients identifies this with vanishing $\pi$-periodization, giving \cref{eq:antipodal} for $E$. A countable determining class for probability measures on the standard Borel readout space gives one angle null set and the measure identity. Conversely, antipodal pairing proves the integral vanishes. Sufficiency also holds for any restricted set of future axes; necessity need not.
\end{proof}
For example, if $s(\lambda+\pi)=-s(\lambda)$ and $|s|\leq1$, then
\begin{equation}
\begin{gathered}
 L(+\mid\lambda)=\frac12+\epsilon\frac{A_\pi(\lambda)}{A(\lambda)+A_\pi(\lambda)}s(\lambda),\\
 0\leq\epsilon\leq\tfrac12,\label{eq:protectedbit}
\end{gathered}
\end{equation}
is nonconstant and protected with $\ell(+)=1/2$. In contrast, at equal widths $g$, summing both outcomes gives the second circular moment (derived in \cref{app:classical})
\begin{equation}
\begin{gathered}
 h_{xy}:=\E_{\rho_{xy}}e^{2i\lambda}=\frac{e^{2ix}+e^{2iy}}{2\cosh(2g)},\\
 \rho_{xy}=\frac{R_AR_B}{Z_{xy}}.\label{eq:moment}
\end{gathered}
\end{equation}
The passive bit $L(+\mid\lambda)=[1+\epsilon\cos2\lambda]/2$ consequently has, for $x=0$,
\begin{equation}
 P(+\mid y)=\frac12+\frac\epsilon2\sech(2g)\cos^2y.\label{eq:leakybit}
\end{equation}
At $g=\epsilon=0.2$ it changes from $0.5925007452$ to $0.5$ as $y$ changes from $0$ to $\pi/2$. A passive exponential timestamp with rate $1+0.2\cos2\lambda$ similarly has survival probabilities at time one of $0.3060094471$ and $0.3732265688$. A scalar depending only on fixed $(x,y)$ cancels in normalization and cannot repair these fixed-setting records.

\begin{corollary}[Continuous exponential-clock rigidity]\label{cor:clock}
For fixed $x,a$, suppose $\tau\mid\lambda$ is exponential with positive continuous rate $v(\lambda)$ on the circle. Its probability distribution is independent of every future axis exactly when $v$ is constant in $\lambda$.
\end{corollary}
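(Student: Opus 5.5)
We will apply \cref{thm:passive} with readout space $[0,\infty)$ and kernel $L(\dd\tau\mid\lambda)=v(\lambda)e^{-v(\lambda)\tau}\dd\tau$. Sufficiency is immediate. If $v$ is constant, $L$ does not depend on $\lambda$, so \cref{eq:antipodal} holds with $\ell=L$.

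For necessity, suppose the law of $\tau$ is independent of every $y\in[0,\pi)$. Then \cref{thm:passive} gives a probability measure $\ell$ satisfying \cref{eq:antipodal} for almost every $\lambda$. We test the identity on the events $E_t=\{\tau>t\}$ for rational $t\ge0$. Write $w(\lambda)=A(\lambda)/[A(\lambda)+A_\pi(\lambda)]\in(0,1)$, which is positive and continuous. The identity becomes
\begin{equation}
 w(\lambda)e^{-v(\lambda)t}+[1-w(\lambda)]e^{-v(\lambda+\pi)t}=\ell(E_t).
\end{equation}
It holds for all rational $t$ off a single null set of $\lambda$. Both sides are continuous in $\lambda$ for each fixed $t$, since $v$ and $A$ are continuous. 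A continuous function that equals a constant almost everywhere equals it everywhere. Hence the identity holds for every $\lambda$ and, by monotonicity and continuity in $t$, for every $t\ge0$.

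Fix $\lambda$ and regard each side as a function of $t$. The left side is a mixture of at most two exponentials with strictly positive weights. The right side $S(t)=\ell(E_t)$ does not depend on $\lambda$. The main step is to show that $S$ determines the rates. Exponentials with distinct rates are linearly independent, so the Laplace-type representation of $S$ is unique. Concretely, $S$ extends to an analytic function of $t$, and its expansion into exponentials determines the set of rates with positive weight.

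Consider first the case $v(\lambda)\ne v(\lambda+\pi)$. Then the set of rates in $S$ is $\{v(\lambda),v(\lambda+\pi)\}$, with weights $w(\lambda)$ and $1-w(\lambda)$. Since $S$ is the same for every $\lambda$, the unordered pair $\{v(\lambda),v(\lambda+\pi)\}$ and the associated weights are independent of $\lambda$. In the other case, $v(\lambda)=v(\lambda+\pi)$ and $S$ is a single exponential with that rate.

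Now use continuity and connectedness of the circle. Let $V$ be the finite set of rates appearing in $S$, which has at most two elements. Every value $v(\lambda)$ lies in $V$. The function $v$ is continuous on the connected circle and takes values in this finite set, so it is constant.

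The main obstacle is the uniqueness argument for exponential mixtures. We will handle it directly. Suppose two positive-weight exponential sums agree for all $t\ge0$. Letting $t\to\infty$, the smallest rate and its weight must match; subtracting those terms and iterating removes the remaining terms. We must also check that the almost-everywhere qualifier in \cref{thm:passive} does no harm, which the continuity argument above settles. Positivity of $w$ is used so that no rate is hidden by a zero weight.
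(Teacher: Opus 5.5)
Your proposal is correct and follows essentially the same route as the paper: apply the antipodal identity to the survival events $\{\tau>t\}$ on a dense countable set of times and extend by continuity in $t$, use uniqueness of the exponential (Laplace) representation with strictly positive weights to confine $v$ to a fixed set of at most two rates, and conclude by continuity and connectedness of the circle. The differences are minor. You use continuity in $\lambda$ to upgrade the identity to every $\lambda$ before identifying rates, whereas the paper does this afterwards. You also prove uniqueness of exponential mixtures by the elementary $t\to\infty$ peeling argument rather than citing uniqueness of Laplace transforms.
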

\begin{proof}
Apply \cref{eq:antipodal} to $\{\tau>t\}$. With $p=A/(A+A_\pi)\in(0,1)$,
$p(\lambda)e^{-v(\lambda)t}+[1-p(\lambda)]e^{-v(\lambda+\pi)t}=S(t)$.
Impose this first on a dense countable set of times, then use continuity in $t$. Uniqueness of Laplace transforms identifies a common at-most-two-point measure of rates. Positive mixture weights confine every rate, outside a null set, to that fixed finite support. Continuity extends this constraint everywhere and connectedness makes the rate constant. The converse is immediate.
\end{proof}
The rate may depend on already recorded $x,a$; the statement concerns a passive clock coupled to an unresolved source angle.

For any measurable policy $y=f(a,r)$ and kernels satisfying \cref{thm:passive}, specify the corrected history measure
\begin{equation}
 \begin{aligned}
 &\widetilde P(\dd\lambda,a,\dd r,b)\\
 &\quad=\frac{A_a(\lambda)B_{f(a,r),b}(\lambda)}{Z_{x,f(a,r)}}\\
 &\qquad\times L_{x,a}(\dd r\mid\lambda)\dd\lambda.
 \end{aligned}\label{eq:passiveadaptive}
\end{equation}
Antipodal pairing, with the selected $y$ held fixed at each record, gives $\widetilde P(a,\dd r)=\ell_{x,a}(\dd r)/2$, so this law already has mass one. This kernel-measure argument handles arbitrary measurable policies and does not assume a common density for all readouts. It stipulates a history law rather than deriving its physical preparation.

\subsection{A common analyzer, recorded seeds, and readable environments}
The compensated density $\rho_{xy}$ in \cref{eq:moment}, together with normalized responses
\begin{equation}
\begin{gathered}
 T_A(a\mid\lambda,x)=A_a/R_A,\\
 T_B(b\mid\lambda,y)=B_{y,b}/R_B,\label{eq:comp}
\end{gathered}
\end{equation}
reproduces the full hidden law \cref{eq:original}. Summing over $a,b$ shows that $\rho_{xy}$ is forced if both responses and that target law are fixed. Each detector uses the same formula for an incoming spin angle $\phi$:
\begin{equation}
\begin{gathered}
 T_\eta(c\mid\phi,z)=\frac12[1+c v_\eta\cos(\phi-z)],\\
 v_\eta=\sech\eta.\label{eq:analyzer}
\end{gathered}
\end{equation}
Alice receives $\phi=\lambda$, Bob $\phi=\lambda+\pi$. Outcome $c$ prepares $\theta_{z,c}$, which supplies the next use of the same kernel. The response depends on the incoming angle and analyzer parameters, without an additional label distinguishing source inputs from prepared continuations. A normalized reset on that input may be inserted. The source density depends on future axes, so it is not a setting-independent forward prior. A formal canonical potential $-\log(R_AR_B)$ at fixed inputs neither supplies a retrocausal carrier nor defines adaptive boundaries.

Let $U$ be uniform on $[0,1]$, $\delta=\phi-z$, $s(\delta)=\operatorname{sgn}\cos\delta$ with opposite assignments at antipodal zeros, and $t_\eta(\delta)=[1+v_\eta|\cos\delta|]/2$. Define
\begin{equation}
 D_z(\phi,U)=s(\delta)\begin{cases}+1,&U<t_\eta(\delta),\\-1,&U\geq t_\eta(\delta).\end{cases}\label{eq:deterministic}
\end{equation}
Its averaged response is \cref{eq:analyzer}, but it also satisfies $D_z(\phi+\pi,U)=-D_z(\phi,U)$ at every seed. For $I_a(\lambda,U)=\mathbf1_{D_x(\lambda,U)=a}$, one has $I_a(\lambda,U)+I_a(\lambda+\pi,U)=1$. Since $\rho_{xy}$ is $\pi$-periodic,
\begin{equation}
 P(a,\dd U\mid x,y)=\tfrac12\dd U.\label{eq:seed}
\end{equation}
Replacing $y$ by any measurable $f(a,U)$ within the history density preserves this identity and normalization. An independent constant-rate latch clock or an independently archived Bob seed preserves this identity after marginalization over later outcomes. Under feedback, the conditional angle distribution given the earlier record is the restricted, normalized history law; it is not automatically the original fixed-setting prior.

Equality of averaged kernels does not protect every realization. The ordinary inverse-CDF detector, $a=+1$ iff $U<p(\lambda)=[1+v_g\cos\lambda]/2$ at $x=0$, gives
\begin{equation}
\begin{gathered}
 \E[aU\mid\lambda]=p(\lambda)^2-\tfrac12,\\
 \E[aU\mid y]=-\tfrac14+\frac{v_g^2}{8}+\frac{v_g^2\sech(2g)}8\cos^2y.\label{eq:icdf}
\end{gathered}
\end{equation}
The last moment differs by $0.1111214901$ between $y=0$ and $\pi/2$ at $g=0.2$.

\begin{theorem}[Environmental protection]\label{thm:environment}
Append a normalized standard Borel kernel $Q(\dd e\mid\lambda,a,U,x)$ without changing $\rho_{xy}$ or \cref{eq:deterministic}. The joint record $(a,U,E)$ is independent of every future axis if and only if a probability kernel $\ell_{a,U,x}$ satisfies, almost everywhere,
\begin{equation}
 \begin{aligned}
 &I_a(\lambda,U)Q(\dd e\mid\lambda,a,U,x)\\
 &\quad+I_a(\lambda+\pi,U)Q(\dd e\mid\lambda+\pi,a,U,x)\\
 &\qquad=\ell_{a,U,x}(\dd e).
 \end{aligned}\label{eq:environment}
\end{equation}
Then $P(a,\dd U,\dd e)=\dd U\,\ell_{a,U,x}(\dd e)/2$, also under $y=f(a,U,e)$ in the corresponding compensated history rule.
\end{theorem}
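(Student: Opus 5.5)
The plan is to reduce the statement to the Fourier argument in the proof of \cref{thm:passive}, now with the seed $U$ treated as part of the record. First I would write the joint law of the record $(a,U,E)$ explicitly. The source density $\rho_{xy}$ is unchanged, the detector is the deterministic map \cref{eq:deterministic}, and $Q$ is appended, so
\begin{equation*}
 P(a,\dd U,\dd e\mid x,y)=\dd U\int_0^{2\pi}\rho_{xy}(\lambda)I_a(\lambda,U)Q(\dd e\mid\lambda,a,U,x)\dd\lambda .
\end{equation*}
Here $\rho_{xy}=R_AR_B/Z_{xy}$. Because $R_A$ is fixed once $x$ is fixed, only $R_B(\cdot;y)/Z_{xy}$ depends on $y$. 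By \cref{eq:moment}, $Z_{xy}$ also depends on $y$, so I would first remove it. Take $e$ to be the whole space: the identity $I_a(\lambda,U)+I_a(\lambda+\pi,U)=1$ together with $\pi$-periodicity gives \cref{eq:seed}. Hence the total mass for each $(a,U)$ is $1/2$ for every $y$, and independence of $y$ is equivalent to independence of the unnormalized law with $Z_{xy}$ absorbed appropriately. More precisely, both sides carry the same normalization, and comparing $P(\cdot\mid y)$ with the $y$-independent $\ell$ is a linear condition.

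\textbf{Sufficiency.} I would substitute $\lambda\mapsto\lambda+\pi$ on half the circle, using that $\rho_{xy}$ is $\pi$-periodic. This folds the integral to
\begin{equation*}
 \int_0^{\pi}\rho_{xy}(\lambda)\bigl[I_aQ(\lambda)+I_aQ(\lambda+\pi)\bigr]\dd\lambda ,
\end{equation*}
which by \cref{eq:environment} equals $\ell_{a,U,x}(\dd e)\int_0^\pi\rho_{xy}\,\dd\lambda=\ell_{a,U,x}(\dd e)/2$. This gives the stated formula $P(a,\dd U,\dd e)=\dd U\,\ell_{a,U,x}(\dd e)/2$.

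For the adaptive clause I would repeat the kernel-measure argument of \cref{eq:passiveadaptive}. Write the compensated history law with $y=f(a,U,e)$ held fixed at each record, so that $\rho_{x,f(a,U,e)}$ is $\pi$-periodic in $\lambda$ for each fixed record. The same folding then applies pointwise in the record, so no common density over records is needed. Total mass one follows from $\ell$ being a probability kernel, and summing Bob's normalized response $T_B$ over $b$ removes the later outputs.

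\textbf{Necessity.} Fix an event $E$ from a countable determining class and define
\begin{equation*}
 F_{U}(\lambda)=I_a(\lambda,U)Q(E\mid\lambda,a,U,x)-\tfrac12\ell_{a,U,x}(E)\,R_A(\lambda)^{-1}c ,
\end{equation*}
where the normalizing constant $c$ is chosen to make the $y$-dependence appear only through $R_B$. In practice the cleaner route is to reuse the proof of \cref{thm:passive} with $A$ replaced by $R_A I_a$, and to set $\ell(E)$ equal to the common value of the measure. Independence for every $y\in[0,\pi)$ then says that
\begin{equation*}
 \int R_A(\lambda)\bigl[I_a(\lambda,U)Q(E\mid\lambda)-\ell(E)\,\chi(\lambda)\bigr]R_B(\lambda;y)\,\dd\lambda=0 \quad\text{for all translates }y,
\end{equation*}
for $\dd U$-almost every $U$. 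Here $\chi$ is chosen so that the subtracted term has the correct $\pi$-periodization. Since $I_a(\lambda)+I_a(\lambda+\pi)=1$, the natural candidate is $\chi=I_a$. Because every even Fourier coefficient of $R_B$ is nonzero, the even coefficients of $R_A\bigl[I_aQ(E\mid\cdot)-\ell(E)I_a\bigr]$ vanish. This means its $\pi$-periodization is zero. Since $R_A$ is positive and $\pi$-periodic, it factors out of the periodization, which yields \cref{eq:environment} for $E$ almost everywhere. The right side comes out as $\ell(E)\bigl[I_a(\lambda)+I_a(\lambda+\pi)\bigr]=\ell(E)$.

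\textbf{Main obstacle.} I expect the main difficulty to be the measure-theoretic bookkeeping. The null sets depend jointly on $(\lambda,U)$ and on $E$, and the $y$-independence holds only for $\dd U$-almost every $U$, for each $y$. To handle this I would work with product sets $E\times$(intervals of $U$) from a countable generating class. I would then apply Fubini to get an almost-everywhere statement in $(\lambda,U)$, and use standard Borel regularity to select $\ell_{a,U,x}$ as a measurable probability kernel.
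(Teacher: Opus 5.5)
Your proposal is correct and essentially follows the paper's proof. Sufficiency is antipodal folding, applied as a kernel-measure identity before integrating the record, which also covers $y=f(a,U,e)$; necessity uses the nonvanishing even Fourier coefficients of the $R_B$ translates to kill the $\pi$-periodization of $R_A[I_aQ(E\mid\cdot)-\ell(E)\chi]$, and then positivity and $\pi$-periodicity of $R_A$ give \cref{eq:environment}. Your choice $\chi=I_a$, with $\ell(E)$ the common conditional probability given $(a,U)$ (twice the $U$-density, via \cref{eq:seed}), is equivalent to the paper's constant $\chi=1/2$ since both have unit antipodal sum, and your product-set/Fubini bookkeeping replaces the paper's dense countable set of axes plus continuity in $y$ (you would need that step only if you fixed $U$ before integrating over $U$-intervals).
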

\begin{proof}
For fixed $a,U,x$ and an event $F$, subtract the protected marginal from $\int\rho_{xy}I_aQ(F\mid\lambda)\dd\lambda$ and multiply by $Z_{xy}$. As in \cref{thm:passive}, the translates of $R_B$ force the even part of $R_A[I_aQ(F\mid\lambda)-\ell_{a,U,x}(F)/2]$ to vanish. Positivity and $\pi$-periodicity of $R_A$ give \cref{eq:environment}. Disintegration, countable determining classes, a dense countable set of axes, and continuity of source translates yield common null sets. Pairing proves sufficiency and, as a kernel-measure identity before integrating the record, permits substitution of any measurable policy.
\end{proof}
Exactly one indicator in \cref{eq:environment} is one. The readable environmental law must therefore be independent of the incoming angle on the portion producing that fixed $(a,U,x)$; omitting an accessible variable from the recorded tuple does not establish consistency for the joint record.

For a counterexample, export the incoming angle to a blank classical environmental spin and prepare the outgoing spin at $\theta_{x,a}$. Read the exported spin before $y$ is chosen, with \cref{eq:analyzer} and an independent seed, obtaining $d$. At equal widths and $x=0$,
\begin{equation}
\begin{gathered}
 \E[ad\mid y]=\frac{v_g^2}{2}[1+\sech(2g)\cos^2y],\\
 \Delta P(a=d)=\frac{v_g^2\sech(2g)}4.\label{eq:export}
\end{gathered}
\end{equation}
The two endpoint probabilities are $0.962503725953$ and $0.740260745742$ at $g=0.2$. A normalized later reset cannot change this archived statistic.

More generally, suppose a deterministic reversible apparatus has a known or noninvasively archived ready microstate, overwrites the incoming angle by an outgoing state determined by $(x,a)$, and makes all other information needed to invert the operation jointly readable before $y$ is chosen. If reading these outputs leaves the source law unchanged, their protected joint distribution would push forward under the fixed inverse map to a protected distribution of $\lambda$. This contradicts \cref{eq:moment}. At unequal finite positive widths, distinct normalized $R_AR_B$ distributions also exist: equality for every $y$ would make all translates of nonconstant $R_B$ proportional, hence equal by their common integral. This restricted classical obstruction excludes neither inaccessible quantum degrees of freedom nor active readouts that change the source boundary law.

\begin{figure}[tbp]
\centering\includegraphics[width=\columnwidth]{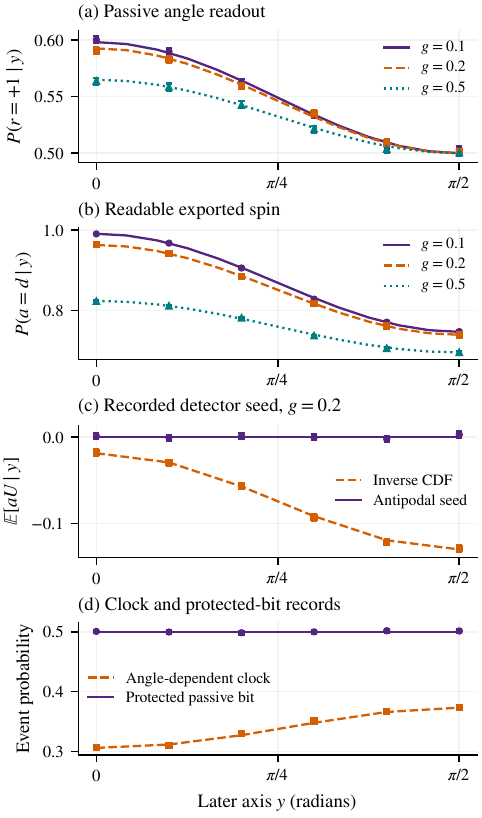}
\caption{Dependence of detector records on the subsequent analyzer axis, with $x=0$. (a) Positive-outcome probability $P(r=+1\mid y)$ of a passive source readout with $\epsilon=0.2$. (b) Agreement probability $P(a=d\mid y)$ between detector and exported-spin outcomes. Panels (a) and (b) use $g\in\{0.1,0.2,0.5\}$. (c) Mean outcome- seed product $\E[aU\mid y]$ for inverse-CDF and antipodal detectors with identical averaged responses. (d) Clock survival probability $P(\tau>1\mid y)$ for rate $1+0.2\cos2\lambda$, and positive-outcome probability of the protected readout in \cref{eq:protectedbit}, with $\epsilon=0.2$ and $s(\lambda)=\sin\lambda$. Panels (c) and (d) use $g=0.2$. Lines connect deterministic predictions; symbols represent $100{,}000$ histories per setting at six equally spaced axes. Error bars are pointwise 95\% Wilson intervals for probabilities and $1.96$ standard errors in (c); some are smaller than the symbols. Readouts on shared histories are correlated. The protected passive record marginalizes the detector seed; protection of their combined record is not asserted.}\label{fig:classical}
\end{figure}

\section{Detector dynamics, controller normalization, and reset}\label{sec:controller}
\subsection{First-event dynamics and the fixed-past limitation}
For a prepared endpoint $(q,b)$, next axis $z$, and width $\eta$, let
\begin{equation}
\begin{gathered}
 k_c=\frac1{\cosh\eta-bc\cos(z-q)},\\
 R=\sum_c k_c=\frac{2\cosh\eta}{\cosh^2\eta-\cos^2(z-q)}.\label{eq:rates}
\end{gathered}
\end{equation}
Two escape rates $\nu k_c$, with $\nu$ in inverse time, generate first-event density $f(c,t)=\nu k_c e^{-\nu Rt}$ and survival $S(t)=e^{-\nu Rt}$. Integration over $t\geq0$ gives $k_c/R$, equal to \cref{eq:analyzer} at $\phi=\theta_{q,b}$. At a finite deadline $t_*$,
\begin{equation}
\begin{gathered}
 P(c,t\leq t_*)=\frac{k_c}{R}(1-e^{-\nu Rt_*}),\\
 P(\varnothing)=e^{-\nu Rt_*}.\label{eq:deadline}
\end{gathered}
\end{equation}
The no-event branch is essential to normalization. Conditioning on detection multiplies earlier branch masses by $1-e^{-\nu R(h)t_*}$ and can disturb the earlier record. The instrument, including the no-event branch, preserves the input record distribution. The use of anomalous-rotation weights as transition rates is a mesoscopic model assumption. Local detailed balance fixes rate ratios, not their full dynamics \cite{maes,seifert}. An explicit barrier model and its finite retention are given in \cref{app:classical}.

Replacing both original arms by normalized detectors and drawing a setting-independent uniform angle gives a different pair law:
\begin{equation}
 P_{\rm fw}(a,b\mid x,y)=\frac14\left[1-\frac{ab}{2}v_{\gamma_A}v_{\gamma_B}\cos(x-y)\right].\label{eq:forward}
\end{equation}
At equal widths $0.2$, its visibility is $0.4805214915$ rather than $0.9250074519$, with optimized coplanar CHSH values $1.3591200205$ and $2.6163161676$. These values characterize the specified forward replacement of the retrocausal source. With $(x,a)$ communicated to a causally later Bob, competing rates proportional to $\widehat W^{ab}_{xy}$ reproduce the desired conditional pair probabilities. An abstract marked process with rate density $\nu A_a(\lambda)B_{y,b}(\lambda)$ has total rate $\nu Z_{xy}/2$ and integrated mark density $2A_aB_{y,b}/Z_{xy}$; drawing $a$ fairly reproduces \cref{eq:original}. Its mark specifies a full source history, not a derived local mechanism.

A normalized later kernel cannot change a fixed earlier hidden distribution $q(\lambda\mid h)$ after all later outputs are summed. Thus it cannot produce two different target distributions \cref{eq:conditional}. At $x=0,a=+1,g=0.2$, their total-variation distance for $y=0,\pi/2$ is $0.5671523677$, so any common $q$ has worst error at least $0.2835761838$. The simpler analytic lower bound is $\sech(0.4)/4$, from the bounded observable $\cos2\lambda$. These bounds constrain approximations based on a future-independent hidden distribution; they do not bound observable signalling in the isolated source model.

\subsection{Gaussian history weights and normalized reset}
Set $r=L/L_{\rm ref}>0$, using one reference mass, and dimensionless coordinates $Q_0,Q_1\in\mathbb R^2$ with fixed normalized ready density $\pi(Q_0)$. The reciprocal raw link
\begin{equation}
 G_{\alpha,r}(Q_1,Q_0)=\frac{r^\alpha}{2\pi}\exp[-r|Q_1-Q_0|^2/2]
\end{equation}
has integrated mass $r^{\alpha-1}$ for every real $\alpha$. At $\alpha=0$ this is $1/r$; at $\alpha=1$ it is a normalized transition. Both have the same reciprocity and ready state. In general $n$ equally scaled quadratic coordinates supply $r^{-n/2}$, explaining why two coordinates give the required inverse power.

For a prepared continuation choose $L=R$ in \cref{eq:rates}; for a later partner's first detection use $L=\widehat Z_{xy}$. The latter needs an interface carrying the relevant earlier setting and source parameters to a timelike later controller. This interface applies to timelike communication; a spacelike implementation or a construction with independent inverse source factors on both source arms would require additional assumptions. With orthogonal outcome feedback the two branch masses have ratio $\varrho^\alpha$, where $\varrho=\widehat Z_{x,x}/\widehat Z_{x,x+\pi/2}=\coth^2\Gamma$. The signed bias and total-variation disturbance are, respectively,
\begin{equation}
\begin{gathered}
 b_\alpha=P(a=+)-\tfrac12=\frac{\varrho^\alpha-1}{2(\varrho^\alpha+1)},\\
 \epsilon_\alpha=|b_\alpha|.\label{eq:gaussianbias}
\end{gathered}
\end{equation}
The absolute value is necessary for negative $\alpha$. At $\Gamma=0.4$ and $\alpha=0,1/2,1,2$, the disturbances are $0$, $0.2246644821$, $0.3738499591$, and $0.4795852858$. Reciprocity does not select the cancelling member.

With $\beta=(k_BT)^{-1}$, specify
\begin{equation}
\begin{gathered}
 H_V=\tfrac12k_BT r|Q_1-Q_0|^2+V(r),\\
 C_V(r)=\frac{e^{-\beta V(r)}}r.\label{eq:energy}
\end{gathered}
\end{equation}
The raw measure is $\dd^2Q_1/(2\pi)$. Cancellation $C_Vr=\text{constant}$ holds precisely for a constant bare potential; constants must also agree between branches. Kinetic energies, degeneracies, and other integrated degrees of freedom must be common factors or included in $V$. The prefactor family corresponds to $V_\alpha=-\alpha k_BT\log r$. Its conditional equilibrium quantities are
\begin{align}
 F_V(r)&=V(r)+k_BT\log r+\text{constant},\\
 \langle-\partial_rH_V\rangle&=-k_BT/r-V'(r),\\
 W_{1\to r}&=V(r)-V(1)+k_BT\log r.\label{eq:work}
\end{align}
At $\alpha=0$ the holding force is $k_BT/r$; at $\alpha=1$ it is zero. Conditional oscillator fluctuations coincide, but backreaction differs. Force data determine $V'$ over a connected range, not separate branch offsets. The reverse quasistatic work cancels the forward work for every $\alpha$, so a zero-work cycle does not select the cancellation.

Ordinary conditional thermal preparation divides by $C_V$ and integrates to one, removing the desired inverse factor. Similarly, a sudden stiffness quench of a normalized two-coordinate Gaussian has $s=|Q|^2/2$ exponentially distributed and
\begin{equation}
 \langle e^{-\beta W}\rangle=\int_0^\infty e^{-s}e^{-(r-1)s}\dd s=1/r.\label{eq:jarzynski}
\end{equation}
This work identity \cite{jarzynski} is a weighted expectation, not an unselected branch probability. A raw Gibbs history ensemble requires its own coupling or boundary-law postulate.

A reset on the complete state $X$, including environmental output $E$, preserves raw branch mass when
\begin{equation}
 \int\mathcal R_{h,u}(\dd X',\dd E\mid X)=1\quad\text{for every }X,h,u.\label{eq:reset}
\end{equation}
Integrating the reset first proves this irrespective of work cost. A display reset can be the restriction of the reversible swap $(m,E=0,H,e)\mapsto(0,E=m,H,e)$, with archive $H$ and preparation interface $e$. Reuse consumes blank capacity or an additional erasure process; it is not restoration of every environment variable. A normalized Gaussian reset
$T_s(Q_2\mid Q_1)=s(2\pi)^{-1}e^{-s|Q_2-Q_1|^2/2}$ appended to $G_{0,r}$ preserves mass $1/r$. Its density at $Q_2=Q_0$ is $s/[2\pi(r+s)]$, whereas a ready window $|Q_2-Q_0|<a_0$ has raw mass
\begin{equation}
 \frac1r\left[1-\exp\left(-\frac{rs a_0^2}{2(r+s)}\right)\right].\label{eq:window}
\end{equation}
An exact endpoint is a density, not a positive-probability event; selecting a window adds a branch-dependent success factor. Finally, two preparations that agree in every variable $\xi$ affecting the next device and its later dynamics have the same response $\int q(\xi)D(o\mid\xi,z)\dd\xi$. Identical distributions over the specified interface variables therefore yield identical device responses.

\section{Quantum instruments and a sharp source-probe tradeoff}\label{sec:quantum}
\subsection{Operational benchmark and a setting-dependent hybrid}
Let $\bm n_x=(\sin x,0,\cos x)$ and let $\ket{s_x}$ be the eigenstate of $\bm n_x\cdot\bm\sigma$ with eigenvalue $s=\pm1$. Define effects and Kraus operators
\begin{align}
 E_a^\gamma(x)&=\tfrac12[\id+a v_\gamma\bm n_x\cdot\bm\sigma],\\
 K_{a,s}^{\gamma,x}&=\sqrt{\frac{1+asv_\gamma}{2}}\ket{a_x}\bra{s_x},\\
 \mathcal I_a^{\gamma,x}(\tau)&=\sum_s K_{a,s}^{\gamma,x}\tau K_{a,s}^{\gamma,x\dagger}
 =\Tr[E_a^\gamma(x)\tau]\ket{a_x}\bra{a_x}.\label{eq:instrument}
\end{align}
Summing $K_{a,s}^\dagger K_{a,s}$ over $a,s$ gives $\id$. This is an entanglement-breaking measure-and-prepare instrument \cite{horodecki}; a pure input at $\phi$ gives \cref{eq:analyzer} and the correct outgoing eigenstate. For fixed widths choose the Werner state \cite{werner}
\begin{equation}
\begin{gathered}
 \omega_w=w\ket{\psi^-}\bra{\psi^-}+(1-w)\id_4/4,\\
 w=\frac{\sech(\gamma_A+\gamma_B)}{v_{\gamma_A}v_{\gamma_B}}
 =\frac1{1+\tanh\gamma_A\tanh\gamma_B}.\label{eq:werner}
\end{gathered}
\end{equation}
Here $\ket{\psi^-}=(\ket{01}-\ket{10})/\sqrt2$. Thus $1/2<w<1$, and its correlation tensor $-w\id_3$ gives
$\Tr[(E_a^{\gamma_A}(x)\otimes E_b^{\gamma_B}(y))\omega_w]=P_0(a,b\mid x,y)$.
At equal widths $0.2$, $w=0.962503725953$. A single unchanged source with freely varied first-segment widths is a different problem; in the stated benchmark those widths fix $w$.

The local isometry
\begin{equation}
\begin{gathered}
 V_x\ket\psi=\sum_{a,s}K_{a,s}^{\gamma,x}\ket\psi\otimes\ket{a,s}_E,\\
 V_x^\dagger V_x=\id,\label{eq:dilation}
\end{gathered}
\end{equation}
extends to a unitary with a ready ancilla \cite{stinespring}. An explicit circuit and source purification appear in \cref{app:quantum}. Orthogonal pointer labels can be copied to an archive; a swap with a blank register resets the display while preserving displaced information. Reversing the dilation requires the correlated output, not a fresh blank environment.

For an unnormalized conditional state $\sigma_h$ after any earlier instrument, a later trace-preserving channel selected by $u=f(h)$ obeys
\begin{equation}
 \Tr\Lambda_{f(h)}(\sigma_h)=\Tr\sigma_h.\label{eq:tp}
\end{equation}
This standard quantum causality property includes earlier local environment measurements, retained reset outputs, and adaptive continuation \cite{combs}. A selected reset outcome is trace decreasing and need not obey it. For the specific environment label in \cref{eq:instrument},
$P(a,s\mid x,y)=[1+asv_{\gamma_A}]/4$, independent of $y$.

A closer connection to the source angle uses the separable but setting-dependent family
\begin{equation}
\begin{gathered}
 \Omega_{xy}=\int_0^{2\pi}\rho_{xy}(\lambda)\ket\lambda\bra\lambda\otimes\ket{\lambda+\pi}\bra{\lambda+\pi}\dd\lambda,
 \\
\ket\lambda=\cos(\lambda/2)\ket0+\sin(\lambda/2)\ket1.\label{eq:hybrid}
\end{gathered}
\end{equation}
Conditioned on $\lambda$, the instruments yield $T_AT_B$, preserving the full isolated law. Antipodal pairing gives $\Tr_A\Omega_{xy}=\Tr_B\Omega_{xy}=\id_2/2$. Consequently any fixed earlier single-arm apparatus with effect $F_h$, including its quantum ancillas and environment, has $P(h\mid x,y)=\Tr(F_h\id_2/2)$. The branchwise rule $y=f(h)$ preserves this local record measure as well. Its joint source state nevertheless depends on settings.

Suppose this same source assignment is used with a joint probe before the later settings. For equal widths, write $q=\sech(2g)$, fix $x=0$, and choose $F_{\rm even}=(\id_4+Z\otimes Z)/2$, where $X,Y,Z$ denote Pauli matrices. Since the conditional $zz$ correlation is $-\cos^2\lambda$,
\begin{equation}
 P({\rm even}\mid y)=\tfrac14-\tfrac q4\cos^2y.\label{eq:parity}
\end{equation}
At $g=0.2$ the endpoint probabilities are $0.018748137024$ and $0.25$. More precisely,
\begin{equation}
\begin{gathered}
 \Omega_{00}-\Omega_{0,\pi/2}=\frac q8(X\otimes X-Z\otimes Z),\\
 D(\Omega_{00},\Omega_{0,\pi/2})=q/4,\label{eq:hybriddistance}
\end{gathered}
\end{equation}
where $D(\alpha,\beta)=\tfrac12\|\alpha-\beta\|_1$; the parenthesized operator has eigenvalues $2,-2,0,0$.

If all earlier joint effects are allowed and a source state is assigned independently of the inserted probe, invariance of every earlier probability forces that state to be setting independent: equality against every rank-one projector determines a Hermitian matrix. If it is also a positive product-state mixture with local detector responses, its decomposition supplies a setting-independent local hidden variable and obeys CHSH \cite{chsh}. Thus the unchanged separable hybrid cannot universally retain both all such records and the target correlations when $\sech\Gamma>1/\sqrt2$. This argument does not apply to a source assignment changed by an inserted probe, which we consider next.

\subsection{Calibrated weak probes: exact bound, attainability, and scope}
Let $\Omega_0=\Omega_{00}$, $\Omega_1=\Omega_{0,\pi/2}$, and $A=Z\otimes Z$. Their expectations are $m_0=-(1+q)/2$ and $m_1=-1/2$. A normalized weak parity instrument has outcomes $r=\pm1$ and
\begin{equation}
\begin{gathered}
 F_r^{(k)}=\tfrac12(\id_4+rkA),\\
 M_r^{(k)}=\sqrt{F_r^{(k)}},\\
0\leq k\leq1.\label{eq:weak}
\end{gathered}
\end{equation}
Both outcomes are retained. Its earlier plus probabilities on the unchanged hybrid are
\begin{equation}
 \begin{gathered}
 P_0(+)=\tfrac12-\tfrac{k(1+q)}4,\\
 P_1(+)=\tfrac12-\tfrac k4,\quad \delta_{\rm raw}(k)=kq/4.
 \end{gathered}\label{eq:rawsignal}
\end{equation}
Both source states commute with $A$, so the nonselective instrument leaves them unchanged. The difference between the archived record distributions therefore persists despite the absence of nonselective state disturbance. Two unitary implementations are derived in \cref{app:quantum}; unsharp measurements of this kind are standard \cite{choudhary}.

\begin{theorem}[Calibrated-probe bound]\label{thm:weak}
Let $\Omega_0,\Omega_1$ be density operators, $A$ Hermitian with $\|A\|_\infty\leq1$, and $0<k\leq1$. Permit source responses $\sigma_j(k)$, but suppose the actual earlier probabilities in both contexts are $\Tr[F_r^{(k)}\sigma_j(k)]$ with $F_r^{(k)}=(\id+rkA)/2$. Set
\begin{equation}
 \begin{gathered}
 \epsilon=\max_{j=0,1}D(\sigma_j(k),\Omega_j),\\
 \Delta m=|\Tr[A(\Omega_0-\Omega_1)]|,\\
 \delta=|P_0(+)-P_1(+)|.
 \end{gathered}
\end{equation}
Then
\begin{equation}
\begin{gathered}
 \delta\geq\frac k2\max\{0,\Delta m-4\epsilon\},\\
 \epsilon\geq\max\{0,\Delta m/4-\delta/(2k)\}.\label{eq:bound}
\end{gathered}
\end{equation}
\end{theorem}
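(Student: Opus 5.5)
The plan is to write the probe signal in both contexts as an expectation of the single bounded observable $A$, compare the actual sources $\sigma_j(k)$ with the reference states $\Omega_j$ through the duality between trace distance and operator norm, and then combine the two contexts with the triangle inequality.

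First, substitute the effect $F_+^{(k)}=(\id+kA)/2$. This gives
$$P_j(+)=\tfrac12+\tfrac k2\Tr[A\sigma_j(k)].$$
Hence
$$\delta=\tfrac k2\,\bigl|\Tr[A(\sigma_0(k)-\sigma_1(k))]\bigr|.$$

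Second, bound each source change. For density operators $\alpha,\beta$ and Hermitian $A$ with $\|A\|_\infty\le1$, Hölder's inequality gives
$$|\Tr[A(\alpha-\beta)]|\le\|A\|_\infty\|\alpha-\beta\|_1\le 2D(\alpha,\beta).$$
Applied to each context, this yields $|\Tr[A(\sigma_j(k)-\Omega_j)]|\le2\epsilon$ for $j=0,1$. One could sharpen this using tracelessness of $\alpha-\beta$, but the factor $2$ is what the stated bound needs.

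Third, decompose
$$\Tr[A(\sigma_0-\sigma_1)]=\Tr[A(\Omega_0-\Omega_1)]+\Tr[A(\sigma_0-\Omega_0)]-\Tr[A(\sigma_1-\Omega_1)].$$
The reverse triangle inequality then gives
$$|\Tr[A(\sigma_0-\sigma_1)]|\ge\Delta m-4\epsilon.$$
Multiplying by $k/2$ gives $\delta\ge\tfrac k2(\Delta m-4\epsilon)$. Since $\delta\ge0$ trivially, this is the first inequality in \cref{eq:bound}.

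Fourth, rearrange. Because $k>0$, the previous inequality is equivalent to $4\epsilon\ge\Delta m-2\delta/k$. Together with $\epsilon\ge0$, this gives the second form in \cref{eq:bound}.

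There is no serious obstacle; the one point needing care is the constant in the second step. The trace-distance convention $D=\tfrac12\|\cdot\|_1$ costs a factor of $2$ per context, which produces the $4\epsilon$. I would also remark that the assumption that the actual probabilities are $\Tr[F_r^{(k)}\sigma_j(k)]$ with the fixed effects is exactly what allows the source change to be measured by $D$ alone.

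For the present hybrid, $\Delta m=q/2$, so exact protection ($\delta=0$) requires $\epsilon\ge q/8=\sech(2g)/8$ at every $k>0$. This matches the abstract's claim, and sharpness would be checked separately, presumably in the text that follows.
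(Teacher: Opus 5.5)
Your proposal is correct and is essentially the paper's own proof: trace-norm duality gives $|\Tr[A(\sigma_j-\Omega_j)]|\leq 2\epsilon$ in each context, the triangle inequality gives $|\Tr[A(\sigma_0-\sigma_1)]|\geq\Delta m-4\epsilon$, and you then multiply by $k/2$ and rearrange. Your explicit computation $P_j(+)=\tfrac12+\tfrac k2\Tr[A\sigma_j(k)]$ uses unit trace of $\sigma_j(k)$; the paper leaves this step implicit, and you are right to spell it out.
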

\begin{proof}
Trace-norm duality gives $|\Tr[A(\sigma_j-\Omega_j)]|\leq2D(\sigma_j,\Omega_j)\leq2\epsilon$. The triangle inequality gives
$|\Tr[A(\sigma_0-\sigma_1)]|\geq\Delta m-4\epsilon$. Multiply by $k/2$, impose nonnegativity, and rearrange.
\end{proof}
If $\Delta m>0$, exact protection at every $k>0$ is incompatible with both $\sigma_j(k)\to\Omega_j$ in trace norm as $k\downarrow0$. For the hybrid pair, $\Delta m=q/2$, so $\epsilon\geq q/8=0.115625931488$ at $g=0.2$, irrespective of nonzero strength and without any Bell-violation requirement.

\begin{proposition}[Sharp tradeoff for the hybrid pair]\label{prop:sharp}
At fixed $k>0$, the minimum worst-context trace-distance change compatible with signal tolerance $\delta\leq\delta_{\max}$ is
\begin{equation}
 \epsilon_{\min}(\delta_{\max},k)=\max\{0,q/8-\delta_{\max}/(2k)\}.\label{eq:sharp}
\end{equation}
The allowed source states can remain separable mixtures of opposite-spin product states.
\end{proposition}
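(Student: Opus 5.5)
The proof has two halves. The lower bound comes directly from \cref{thm:weak}, and attainability comes from an explicit construction of source responses.

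For the lower bound, apply \cref{thm:weak} to the hybrid pair $\Omega_0=\Omega_{00}$, $\Omega_1=\Omega_{0,\pi/2}$ with $A=Z\otimes Z$. The expectations $m_0=-(1+q)/2$ and $m_1=-1/2$ give $\Delta m=q/2$. Any responses with $\delta\leq\delta_{\max}$ then satisfy
\[
\epsilon\geq\max\{0,q/8-\delta/(2k)\}\geq\max\{0,q/8-\delta_{\max}/(2k)\}.
\]
So $\epsilon_{\min}$ is at least the right-hand side of \cref{eq:sharp}.

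For attainability, I would move both states along the $A$-direction using only opposite-spin product states, so that the trace distance equals half the change in $\langle A\rangle$. The key object is the conditional $zz$ correlation $-\cos^2\lambda$ of the product $\ket\lambda\bra\lambda\otimes\ket{\lambda+\pi}\bra{\lambda+\pi}$. Write $\Omega_j=\int\rho_j(\lambda)\Pi_\lambda\,\dd\lambda$. Then mixing $\Omega_j$ with opposite-spin states at $\lambda=\pi/2$ (where $\langle A\rangle=0$) or at $\lambda=0$ (where $\langle A\rangle=-1$) shifts $\langle A\rangle$ up or down. Set $t=\max\{0,q/8-\delta_{\max}/(2k)\}$.

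\begin{itemize}
\item Raise $\langle A\rangle$ in context $0$ by $2t$: take $\sigma_0=(1-p_0)\Omega_0+p_0\Pi_{\pi/2}$. Then $\Tr[A(\sigma_0-\Omega_0)]=p_0(1+q)/2$, so choose $p_0=4t/(1+q)$.
\item Lower $\langle A\rangle$ in context $1$ by $2t$: take $\sigma_1=(1-p_1)\Omega_1+p_1\Pi_0$. Then $\Tr[A(\sigma_1-\Omega_1)]=-p_1/2$, so choose $p_1=4t$.
\end{itemize}
Both $p_j$ lie in $[0,1]$, since $t\leq q/8\leq1/8$. The mean gap becomes $q/2-4t$, so
\[
\delta=\tfrac{k}{2}(q/2-4t)=\min\{kq/4,\delta_{\max}\}\leq\delta_{\max}.
\]
When $t=0$ this is just the unchanged hybrid with $\delta_{\rm raw}=kq/4\leq\delta_{\max}$.

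The step I expect to be the main obstacle is showing $D(\sigma_j,\Omega_j)=t$ exactly, and not merely bounding it by $p_j$. Here $\sigma_j-\Omega_j=p_j(\Pi_\ast-\Omega_j)$, and its trace norm need not equal $|\Tr[A(\cdot)]|$ unless the difference is supported consistently with the eigenspaces of $A$. My first attempt would be to pinch both states into the $A$ eigenbasis and use block structure. If that fails, I would instead choose a perturbation of the density $\rho_j(\lambda)$ itself: transfer mass $\Delta\rho$ from region $\{\cos^2\lambda<\text{thr}\}$ to $\{\cos^2\lambda>\text{thr}\}$ (or the reverse). The operator difference is then a combination of products whose $X\otimes X$ and $Z\otimes Z$ parts are linked by \cref{eq:hybriddistance}-type computations. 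I would compute its eigenvalues in the $\{\ket{00},\ket{11}\}$ and $\{\ket{01},\ket{10}\}$ blocks to verify that the trace norm equals $|\Delta\langle A\rangle|$, i.e.\ that the difference is $c\,(\id-A)/2$-like on the relevant block. If exact equality fails, I would choose the mixing states as the precise eigen-directions, namely $\Pi_0$ combined with its $X\otimes X$ partner, to cancel the off-diagonal part. The bound $\epsilon\leq t$ then matches the lower bound, which establishes \cref{eq:sharp} with separable opposite-spin mixtures.
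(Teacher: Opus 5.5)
\textbf{There is a genuine gap in the attainability half.} Your lower bound and your computation $\delta=\min\{kq/4,\delta_{\max}\}$ are correct. The gap is at the step you flag yourself, and for $t>0$ the claimed equality $D(\sigma_j,\Omega_j)=t$ is false for your chosen mixing states. A single opposite-spin product such as $\Pi_0=\ket{01}\bra{01}$ or $\Pi_{\pi/2}=\ket{+}\bra{+}\otimes\ket{-}\bra{-}$ has pure marginals, whereas $\Omega_j$ has maximally mixed marginals. So $\Pi_\ast-\Omega_j$ picks up local terms that add trace norm without moving $\langle A\rangle$.

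Explicitly, $\Pi_0-\Omega_1=\tfrac14(Z\otimes\id-\id\otimes Z)+\tfrac18(X\otimes X-Z\otimes Z)$. This has $\tfrac12\|\Pi_0-\Omega_1\|_1=(1+\sqrt{17})/8$, while $\tfrac12|\Tr[A(\Pi_0-\Omega_1)]|=1/4$. Your $\sigma_1$ therefore has $D(\sigma_1,\Omega_1)=\tfrac{1+\sqrt{17}}{2}\,t\approx2.56\,t$. Likewise $\Pi_{\pi/2}-\Omega_0=\tfrac14(X\otimes\id-\id\otimes X)+\tfrac{1+q}{8}(Z\otimes Z-X\otimes X)$ gives $D(\sigma_0,\Omega_0)=\tfrac t2\bigl[1+\sqrt{1+16/(1+q)^2}\bigr]\geq\tfrac{1+\sqrt5}{2}\,t$.

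Your pinching fallback only half works. It repairs $\sigma_0$, because $X\otimes\id-\id\otimes X$ anticommutes with $A$ and is removed. It does not repair $\sigma_1$, because $Z\otimes\id-\id\otimes Z$ commutes with $A$ and survives pinching. The other fallbacks are not carried out, so as written the upper bound $\epsilon\leq t$ is unproved.

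\textbf{The missing observation.} Equality in $|\Tr[A\Delta]|\leq\|\Delta\|_1$ needs the perturbation $\Delta$ to be proportional to $X\otimes X-Z\otimes Z$. Its nonzero eigenvalues are $+2$ on $(\ket{01}+\ket{10})/\sqrt2$, where $A=-1$, and $-2$ on $(\ket{00}-\ket{11})/\sqrt2$, where $A=+1$. Hence $\|X\otimes X-Z\otimes Z\|_1=|\Tr[A(X\otimes X-Z\otimes Z)]|=4$.

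Antipodal averaging of your mixing states achieves this. We have $\tfrac12(\Pi_0+\Pi_\pi)=\tfrac14(\id-Z\otimes Z)$ and $\tfrac12(\Pi_{\pi/2}+\Pi_{3\pi/2})=\tfrac14(\id-X\otimes X)$. Their differences from $\Omega_1$ and $\Omega_0$ are $\tfrac18(X\otimes X-Z\otimes Z)$ and $\tfrac{1+q}{8}(Z\otimes Z-X\otimes X)$. With your same $p_j$ this gives $D(\sigma_j,\Omega_j)=t$ exactly. Your closing remark about an $X\otimes X$ partner of $\Pi_0$ gestures at this but is not verified. Your density-transfer fallback would also work, provided the moved mass is $\pi$-periodic and even in $\lambda$, so that no local or $X\otimes Z+Z\otimes X$ terms appear.

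\textbf{Comparison with the paper.} The paper's proof is shorter. By \cref{eq:hybriddistance}, $\Omega_0-\Omega_1=\tfrac q8(X\otimes X-Z\otimes Z)$ already lies in the optimal direction. It therefore mixes each context toward the other: $\sigma_0=(1-t)\Omega_0+t\Omega_1$ and $\sigma_1=t\Omega_0+(1-t)\Omega_1$. This gives $\epsilon=tq/4$ and $\delta=(1-2t)kq/4$ at once, and convexity preserves separability and opposite-spin product support.
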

\begin{proof}
Necessity is \cref{eq:bound}. For $0\leq t\leq1/2$, take
$\sigma_0=(1-t)\Omega_0+t\Omega_1$ and $\sigma_1=t\Omega_0+(1-t)\Omega_1$.
By \cref{eq:hybriddistance}, $\epsilon=tq/4$, while linearity gives $\delta=(1-2t)kq/4$. This saturates the full decreasing boundary. Choose $t=0$ if $\delta_{\max}\geq kq/4$; otherwise take $t=(1-4\delta_{\max}/(kq))/2$. Convexity preserves separability and the stated product-state support.
\end{proof}
\begin{figure}[tbp]
\centering\includegraphics[width=\columnwidth]{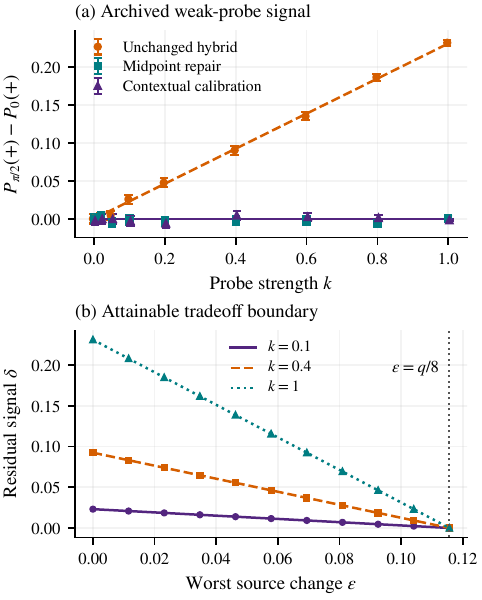}
\caption{Weak-probe record dependence and source modification at $g=0.2$, $x=0$, and $q=\sech(2g)$. (a) Signed probability difference $P(+\mid y=\pi/2)-P(+\mid y=0)$ versus strength $k$. Lines predict $kq/4$ for the unmodified hybrid source and zero for the midpoint source and contextual calibration, which use distinct source and calibration assumptions. Symbols show $50{,}000$ records per independently sampled context, with pointwise 95\% normal intervals of $1.96$ two-sample standard errors. Horizontal offsets separate coincident strengths. (b) Attainable residual signal $\delta$ versus maximum source trace-distance change $\epsilon=\max_jD(\sigma_j,\Omega_j)$. Lines give $\max\{0,kq/4-2k\epsilon\}$ for $k=0.1,0.4,1$; symbols are deterministic evaluations of interpolated source states. The vertical line marks the minimum change for exact protection, $\epsilon=q/8\simeq0.115626$, for $k>0$. The bound concerns the specified two-context calibrated probe.}\label{fig:weak}
\end{figure}

At exact protection $t=1/2$, both states equal the midpoint, whose distance from each original is $q/8$. This optimal source modification is discontinuous at zero strength if the two no-probe states are to be recovered. It preserves the specified probe record for the two contexts, without requiring preservation of every isolated detector statistic or extending to arbitrary axes and instruments. For $k=0.1$ and $\delta_{\max}=0.001$, the minimum is $0.110625931488$. A fixed absolute signal tolerance becomes uninformative at sufficiently small $k$, unlike exact protection.

The calibration assumption states that the density matrix with these fixed effects predicts the actual probabilities in each future context. Calibration on ordinary forward preparations alone does not establish it in a retrocausal extension. With effect uncertainty $\|G_{+,j}-F_+^{(k)}\|_\infty\leq\eta$, the necessary bound becomes
$\epsilon\geq\max\{0,\Delta m/4-(\delta+2\eta)/(2k)\}$.
Errors $o(k)$ preserve the exact-protection continuity obstruction; errors of order $k$ can cancel the leading signal. Small $k$ also makes numerical rearrangement of a measured probability difference ill-conditioned; the stable forward inequality should be checked directly.

Smooth positive histories alone do not imply the calibrated obstruction. For any density $\rho_u$, real $|f|\leq1$, and $m_u=\E_{\rho_u}f$, define for $0\leq k<1$
\begin{equation}
 \widetilde P_u(\dd\lambda,r)=\rho_u(\lambda)\frac{1+rkf(\lambda)}{2(1+rkm_u)}\dd\lambda.\label{eq:smooth}
\end{equation}
Each branch has mass $1/2$, including a branchwise policy $u=u(r)$. At fixed $u$, summing and dividing gives
\begin{equation}
\begin{gathered}
 \widetilde\rho_u=\rho_u\frac{1-k^2m_uf}{1-k^2m_u^2},\\
 \widetilde L_{k,u}(r\mid\lambda)=\frac12\left[1+rk\frac{f(\lambda)-m_u}{1-k^2m_uf(\lambda)}\right].\label{eq:smoothresponse}
\end{gathered}
\end{equation}
The source changes by $O(k^2)$ in total variation, but the normalized response changes by $O(k)$ from $(1+rkf)/2$ and depends on the future context at fixed $\lambda$. This evades the response assumption; the inverse branch factor was chosen to protect the record and is not a derived interaction. In contrast, substituting the unchanged hybrid into the policy $y(+)=0,y(-)=\pi/2$ produces mass $1-kq/4$; reversing the policy gives $1+kq/4$. At $g=0.2,k=0.1$ the first is $0.976874813702$. Global renormalization or a later normalized reset does not establish policy-independent records.

\section{A contextual history embedding and its access obstruction}\label{sec:contextual}
\subsection{Paired apparatus, backward fields, and finite adaptive consistency}
Consider finite experiments with one preparation region and one or two outgoing system arms. Joint source probes occur before separation; each arm permits finite local sequences, retained local environments, and classical feed-forward along causal paths. Independent multiple-source networks and recombination are outside the domain. At each setting $u$ and outcome $o$, specify a nonnegative raw hidden transition $W_{u,o}(\xi,\dd\xi')$ and a completely positive operation $\mathcal J_{u,o}$; the sum over $o$ is trace preserving. Their pairing is model structure, not a consequence of the original source law. The source carries a normalized reference measure $\mu$ and a fixed positive unit-trace calibration tensor $\omega$. The reference $\mu$ is a factor in the history measure, not an independently fixed physical forward source marginal.

\textit{Operational-access postulate.} Admissible records are outcomes of the declared quantum instruments, including instruments on their local environments, clocks, registers, archives, and reset outputs, with raw partners satisfying the support condition below. Copies of the hidden angle or of the added response fields are not implicitly allowed. A physical theory must justify this restriction or specify consistent disturbance when a forbidden readout is attempted.

For the actual setting and outcome, propagate a likelihood $\ell\geq0$ and positive quantum effect $E$ backwards:
\begin{equation}
\begin{gathered}
 \ell_{\rm in}(\xi)=\int W_{u,o}(\xi,\dd\xi')\ell_{\rm out}(\xi'),\\
 E_{\rm in}=\mathcal J_{u,o}^*(E_{\rm out}),\label{eq:recurrence}
\end{gathered}
\end{equation}
where $\Tr[\mathcal J(\tau)E]=\Tr[\tau\mathcal J^*(E)]$. Terminal unused outputs have $\ell=1,E=\id$. At the source split, arm likelihoods multiply and effects tensor: $\ell_S=\ell_A\ell_B$, $E_S=E_A\otimes E_B$. Earlier source probes propagate these through their own $W$ and $\mathcal J$. Environments remain in the local state until their allowed records are resolved. A complete history $h$ fixes causal branch settings, so these finite backward recurrences uniquely determine the fields.

The field measure is the deterministic pushforward of the raw trajectory measure under this solution. In a finite hidden-state discretization, this amounts to normalized Dirac constraints, not an extra field partition factor. The circle model requires function-valued fields. Adjacent-path recurrences and a common source factor supply a factor-graph description, but do not establish finite material carriers, time-reversal invariance, or Lorentz-covariant dynamics. Backward quantum effects also occur as inferential objects \cite{gammelmark}; an inference formula does not itself make an effect a physical field.

Define branch masses, source factor, and formal potential by
\begin{equation}
\begin{gathered}
 d_h=\int\ell_{S,h}(\xi)\mu(\dd\xi),\\
 n_h=\Tr(\omega E_{S,h}),\\
 C_h=\frac{n_h}{d_h},\\
 V_{S,h}=\log d_h-\log n_h.\label{eq:factor}
\end{gathered}
\end{equation}
The numerator and denominator use the same record reference measure. Their ratio is dimensionless and is a raw history factor of the type distinguished from normalized thermal preparation in \cref{sec:controller}.

\begin{theorem}[Consistency of the declared record algebra]\label{thm:embedding}
Fix the initial $\mu,\omega$, the earlier apparatus, and the pairing rules. Let $R_h(\dd\zeta)$ be the raw hidden-trajectory measure with mass $d_h$, including the deterministic field solution. Assume all raw branch masses are finite and $d_h>0$ whenever $n_h>0$. Then
\begin{equation}
 P(h,\dd\zeta)=n_h\frac{R_h(\dd\zeta)}{d_h}\label{eq:embedding}
\end{equation}
defines a normalized nonnegative law for every finite causal adaptive experiment in the stated domain. Its complete operational record distribution is $n_h$, and each earlier record marginal is unchanged by subsequent causal choices. Omit branches with $d_h=n_h=0$ and assign zero measure when $d_h>0=n_h$. A branch with $d_h=0<n_h$ is inadmissible.
\end{theorem}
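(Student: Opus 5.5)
The proof reduces the embedding law to the quantum record distribution and then invokes quantum causality, \cref{eq:tp}.

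\textbf{Nonnegativity and the hidden marginal.} Since $n_h\geq0$, $d_h>0$, and $R_h$ is a nonnegative measure, the right side of \cref{eq:embedding} is nonnegative on every admissible branch. The conventions at the end of the statement cover the remaining cases: branches with $d_h=n_h=0$ are omitted, and branches with $d_h>0=n_h$ receive zero measure. The field values are a deterministic pushforward of the trajectory, by the finite backward recurrences \cref{eq:recurrence}. Hence integrating $R_h$ over $\zeta$, fields included, returns exactly $d_h$, with no extra partition factor. This gives $P(h)=\int P(h,\dd\zeta)=n_h$.

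\textbf{Identifying $n_h$ with the quantum record probability.} Consider the ordinary quantum experiment: prepare $\omega$, apply the joint source probes, split, and then run local instruments $\mathcal J_{u,o}$ along the history, with settings chosen by causal feed-forward. Its probability for the complete record $h$ is $\Tr[\mathcal J_h(\omega)]$, where $\mathcal J_h$ is the composed CP map. I will show by induction along the causal order, from terminal outputs back to the source, that this equals $\Tr(\omega E_{S,h})$. The induction step uses two facts:
\begin{enumerate}
\item the duality $\Tr[\mathcal J(\tau)E]=\Tr[\tau\mathcal J^*(E)]$ at each operation;
\item the tensor-product rule $E_S=E_A\otimes E_B$ at the split, because operations on distinct arms act on distinct tensor factors.
\end{enumerate}
Unused terminal outputs carry $E=\id$. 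Local environments are kept inside the state until their declared records are read, so the effect $E_{S,h}$ is well defined. Since the settings in $h$ are fixed by $h$ itself, an adaptive $u=f(\text{earlier record})$ just selects which $\mathcal J$ appears in each branch. Hence $n_h=P_{\rm quantum}(h)$.

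\textbf{Normalization and earlier marginals.} Summing $n_h$ over complete records of a causal adaptive experiment gives one, because each instrument sums to a trace-preserving map. To see this, peel off the last operations one at a time. Each step applies \cref{eq:tp} to the unnormalized conditional state $\sigma_{h'}$ for the earlier partial record $h'$, including the case where the last choice is $f(h')$. The same peeling shows that $\sum_{\text{later}}n_h$ equals the quantum probability of the earlier record $h'$. That probability depends only on operations up to the earlier cut, so it is unchanged by any subsequent causal choices. With two arms, causal feed-forward ordering supplies a total order compatible with the instruments, and operations on different arms commute as tensor factors, so the order is immaterial.

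\textbf{Main obstacle.} The delicate step is ensuring that no part of the hidden construction can reintroduce dependence on later choices. $R_h$ and $d_h$ do depend on later settings, so this must be checked. The resolution is that after the normalization $R_h/d_h$, the hidden part contributes a conditional probability measure whose mass is exactly one, so only $n_h$ survives in any operational marginal. This requires the deterministic-field pushforward and the finiteness of $d_h$. The operational-access postulate is also needed: records are only outcomes of declared instruments, never functions of $\zeta$. Otherwise a readout of $\zeta$ would give a marginal computed from $R_h/d_h$, which is not covered by the argument.
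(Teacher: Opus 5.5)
Your proposal is correct and follows essentially the same route as the paper. Integrating $R_h/d_h$ gives $P(h)=n_h$; the adjoint recurrences, with $E_S=E_A\otimes E_B$ at the split, identify $n_h$ with the quantum leaf probability $\Tr\tau_h$; and leaf-to-root induction using trace preservation of each complete instrument gives both normalization to $\Tr\omega=1$ and invariance of every earlier prefix marginal. The paper also pads stopped branches with identity operations to handle adaptive branches of different depths, and invokes measurable kernels with Tonelli's theorem for continuous records; you should add both small points to your write-up.
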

\begin{proof}
Integration gives $P(h)=n_h$. For an explicit finite adaptive-tree proof, let $\tau_v$ be the unnormalized operational state at a node $v$ determined by its earlier prefix. If the chosen complete instrument is $\{\mathcal J_{u(v),o}\}_o$, child states are $\tau_{vo}=\mathcal J_{u(v),o}(\tau_v)$ and $\sum_o\Tr\tau_{vo}=\Tr\tau_v$. Induct from leaves toward the root: the total mass of all leaves below any node is its incoming trace, independent of choices further down that subtree. Branches of different depths can be treated uniformly by extending a stopped branch with identity operations. The root trace is $\Tr\omega=1$; stopping elimination at an earlier cut proves preservation of the joint prefix record. The adjoint recurrences give the same leaf probability $\Tr\tau_h=\Tr(\omega E_{S,h})$. For continuous records, require measurable instrument densities or kernels, measurable policy maps, and the support and integrability assumptions almost everywhere relative to the record reference measure. Nonnegativity permits the corresponding iterated integrals by Tonelli's theorem.
\end{proof}
The preparation must remain fixed when comparing future policies. The theorem protects operational prefixes, not the marginal law of hidden past trajectories. Equation~\eqref{eq:embedding} embeds the specified quantum law as the product of its record probability and a normalized conditional hidden measure. Altering $W$ while retaining its paired quantum operation and support may change hidden histories without changing any admitted record. The quantum probabilities, source tensor, and access restrictions remain assumptions of the construction.

\subsection{Recovery of the isolated source, probe response, and field accessibility}
Use the same raw analyzer at every incoming angle,
\begin{equation}
 W_{\gamma,z,a}(\phi,\dd\phi')=p_\gamma(\theta_{z,a}-\phi)\delta_{\theta_{z,a}}(\dd\phi'),\label{eq:rawanalyzer}
\end{equation}
paired with \cref{eq:instrument}. For the pair choose opposite source angles, $\mu(\dd\lambda)=\dd\lambda/(2\pi)$, and $\omega=\omega_w$ with fixed first-segment widths. Then
\begin{equation}
\begin{gathered}
 d_{ab}=\frac1{2\pi}\int A_aB_{y,b}\dd\lambda,\\
 n_{ab}=\frac14[1-ab\sech\Gamma\cos(x-y)]
       =\frac{2\pi}{Z_{xy}}d_{ab}.\label{eq:recovery}
\end{gathered}
\end{equation}
The ratio is independent of outcomes, so \cref{eq:embedding} exactly recovers \cref{eq:original}, including its angle dependence. Additional source-path variables are preserved when their conditional raw path law is retained. The numerical tests evaluate the endpoint angle-outcome law. For a prepared input set $\mu=\delta_\phi$, $\omega=\ket\phi\bra\phi$. The same analyzer has $d_a=p_\gamma(\theta_{z,a}-\phi)$, $n_a=d_a/\sum_c d_c$, and $C_a=1/\sum_c d_c$, giving \cref{eq:analyzer} with the correct prepared output. Continuation widths may vary, while matching the original pair with one fixed $\omega_w$ retains fixed first-segment widths.

For positive qubit arm effects $E_j=(t_j\id+\bm q_j\cdot\bm\sigma)/2$, positivity gives $|\bm q_j|\leq t_j$, and
\begin{equation}
 n=\frac{t_At_B-w\bm q_A\cdot\bm q_B}{4}\geq\frac{1-w}{4}t_At_B\geq0.\label{eq:bilinear}
\end{equation}
This is an explicit source functional with one fixed coefficient $w$; quantum calibration remains an input.

For equal first-segment widths $\gamma_A=\gamma_B=g$, insert \cref{eq:weak} in the preparation region, with $0\leq k<1$, paired to
$W_{k,r}(\lambda,\dd\lambda')=L_r(\lambda)\delta_\lambda(\dd\lambda')$, where $L_r(\lambda)=[1+rk\cos2\lambda]/2$. For the complete record $(r,a,b)$,
\begin{align}
 d_{rab}&=\frac1{2\pi}\int L_rA_aB_{y,b}\dd\lambda,\\
 n_{rab}&=\Tr\!\left[\omega_wM_r^\dagger(E_a^g(x)\otimes E_b^g(y))M_r\right],\\
 P(\dd\lambda,r,a,b)&=\frac{n_{rab}}{d_{rab}}\frac{L_r(\lambda)A_a(\lambda)B_{y,b}(\lambda)}{2\pi}\dd\lambda.\label{eq:contextprobe}
\end{align}
Positive raw factors ensure support. Summing future detector outcomes gives
\begin{equation}
 P(r)=\tfrac12(1-rkw),\label{eq:contextrecord}
\end{equation}
independent of both later axes, even if chosen separately on each $r$ branch. The full $n_{rab}$ also specifies the conditioned continuation. At $k=0$, $M_r=\id_4/\sqrt2$, $L_r=1/2$, and summing $r$ recovers \cref{eq:original}. For fixed positive widths and axes the denominators stay positive near zero, so the hidden law converges continuously. Quantum simulations of the operational instrument also include $k=1$; the simple strict-positivity argument for its raw partner is stated only for $k<1$.

The operational source in this construction is $\omega_w$, which determines the source-probe statistics independently of the future-dependent hidden product-state mixture $\Omega_{xy}$. The construction therefore uses a different calibration assumption from that of \cref{thm:weak}. Admitted environment refinements, normalized clocks, archives, and reset have explicit partners in \cref{app:partners}.

\begin{figure}[tbp]
\centering\includegraphics[width=\columnwidth]{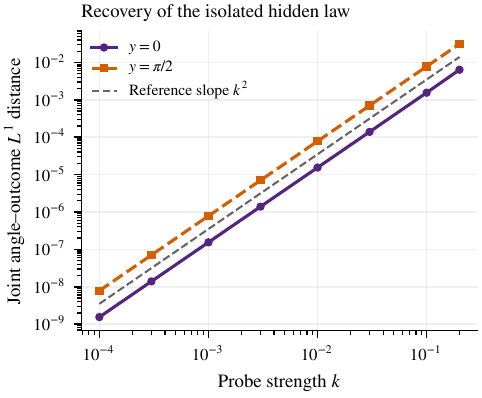}
\caption{Recovery of the isolated hidden source law as probe strength vanishes. The joint angle--outcome distance is $\sum_{a,b}\int_0^{2\pi}|p_k(\lambda,a,b)-p_0(\lambda,a,b)|\,\dd\lambda$, where $p_k=\sum_rP_k(\lambda,r,a,b)$ marginalizes the probe outcome in \cref{eq:contextprobe}, and $p_0$ is the isolated density in \cref{eq:original}. This $L^1$ distance equals twice total variation. Points use deterministic integration on $32{,}768$ angle-grid points, with $g=0.2$, $x=0$, and $y=0,\pi/2$; solid lines connect them. At $k=10^{-4}$, the respective distances are $1.54\times10^{-9}$ and $7.82\times10^{-9}$. The dashed line indicates $k^2$ scaling; the $y=\pi/2$ dependence is exactly quadratic in this construction. The metric concerns the hidden law, not observable probe-record differences. Exact recovery at $k=0$ is checked separately from the logarithmic plot.}\label{fig:convergence}
\end{figure}

For an otherwise isolated future analyzer, the source-side likelihood is $\ell_{x,a}(\phi)=p_g(\theta_{x,a}-\phi)$. Its normalized second harmonic is
\begin{equation}
 \chi_2(\ell)=\frac{\int\ell(\phi)e^{2i\phi}\dd\phi}{\int\ell(\phi)\dd\phi}
 =e^{-2g}e^{2ix}.\label{eq:carrier}
\end{equation}
Neither outcome $a$ nor positive rescaling changes it. An earlier device reading $\operatorname{sgn}\Re\chi_2$ while leaving the downstream recurrence intact therefore has plus probability one at future $x=0$ and zero at $x=\pi/2$. No normalized reweighting on those supports can protect its record. At $g=0.2$ the carrier values are $\pm0.670320046036$. A quantum calibration giving positive weight to a raw-forbidden readout would instead violate $d_h>0$ whenever $n_h>0$.

This support obstruction demonstrates that the access restriction is necessary for the stated embedding. A physical realization must account for either the restriction on field measurements or a consistent modification of the field equations induced by measurement. When the fields serve only as a representation of probabilities, the construction remains a statistical embedding, with a physical retrocausal carrier unspecified.

\section{Numerical experiments and discussion}\label{sec:experiments}
\subsection{Simulation methods and numerical results}
The numerical analysis combines deterministic integration, matrix calculations, and Monte Carlo simulations implemented in the accompanying Python programs. All data are generated from the specified models. Supplementary numerical documentation provides the simulation configurations, estimates, uncertainties, software versions, data formats, and limitations. The accompanying computational materials include compressed trial records, summary CSV files, JSON diagnostics, source code, and additional numerical cross-checks.

The continuous-angle experiment uses equal widths $g\in\{0.1,0.2,0.5\}$, $x=0$, and six equally spaced axes $y\in[0,\pi/2]$, with $100{,}000$ accepted histories per cell. Sampling from $\rho_{xy}$ uses a continuous-angle rejection scheme with proposal density $R_A/2$, an equal mixture of two wrapped-Cauchy distributions, and acceptance probability $R_B/\max R_B$, where $\max R_B=\coth(g)/\pi$. The accepted density is exactly proportional to $R_AR_B$. Independent uniform variates generate the detector outcomes and specified readouts; the angle, seeds, outcomes, and timestamp are retained. The dataset contains $1{,}800{,}000$ accepted histories from $22{,}028{,}513$ proposals. The recorded readouts correspond to alternative apparatus extensions. The consistency results apply separately to $(a,r)$ for the passive kernel and $(a,U)$ for the antipodal detector; consistency of their combined tuple would require an additional joint analysis.

The competing-clock simulations comprise $360{,}000$ trajectories across 18 parameter combinations. Both exponential waiting times are sampled, and the first event is retained, with no-event branches included for finite deadlines. The Gaussian reset simulations comprise $420{,}000$ two-coordinate trajectories across 21 parameter combinations and evaluate normalized conditional displacements and reset-window probabilities. Raw partition factors and return densities are verified by independent quadrature; normalized Gaussian samples alone cannot establish inverse raw history mass. Periodic-grid refinement and adaptive quadrature check source identities, unequal widths, first-event formulas, controller forces, work averages, and reset integrals.

The quantum experiment samples $r,a,b$ sequentially from the post-measurement density operators, using the unchanged hybrid, its common midpoint source, and the contextual Werner calibration. At $g=0.2,x=0$, two future contexts $y=0,\pi/2$ and nine strengths $k\in\{0,0.02,0.05,0.1,0.2,0.4,0.6,0.8,1\}$ give 54 cells of $50{,}000$ trials, or $2{,}700{,}000$ complete records. Six further adaptive cells of $50{,}000$ trials give $300{,}000$ records. For the unchanged hybrid, adaptive sampling uses the globally normalized branchwise history, with the raw mass recorded separately to quantify its deviation from unity. This procedure samples a context-dependent history law rather than a setting-independent forward circuit. Angle-quadrature source matrices, Kraus maps, explicit unitary meters and detector circuits, and direct versus backward-effect evaluation provide deterministic cross-checks.

The exhaustive policy enumeration uses the contextual calibration at $g=0.2$, source-probe strength $k=0.3$, Alice axis $0.237$ radians, and a measured Kraus-environment rotation of $0.413$ radians. The earlier record $h=(r,a,e)$ has eight values. Each selects one of four Bob axes $\{0,\pi/4,\pi/2,3\pi/4\}$. Enumerating all $4^8=65{,}536$ policies and summing all four Bob Kraus terms evaluates $2{,}097{,}152$ complete branch weights. The largest earlier-record residual is $1.11\times10^{-16}$ and the largest normalization residual $6.66\times10^{-16}$. The enumeration covers this finite policy class; the extension to arbitrary measurable policies and the continuum source follows from the analytical results.

\begin{table}[tbp]
\centering
\footnotesize
\setlength{\tabcolsep}{3pt}
\renewcommand{\arraystretch}{1.08}

\caption{Residuals of deterministic numerical checks. Except for the
relative Gaussian partition error, entries are absolute deviations in the
indicated probability, moment, integral, density, or matrix-element identity.
These floating-point residuals quantify agreement between numerical
evaluations and their reference identities; rigorous discretization error
bounds require a separate analysis. The largest residual among the reported
quantum checks is $6.44\times10^{-15}$. The carrier harmonic residual measures
agreement with the analytical expression; the associated operational signal
has unit magnitude.}
\label{tab:residuals}

\begin{tabular}{@{}ll@{}}
\toprule
Deterministic check & Largest residual \\
\midrule

Source pair probabilities
&
$6.66\times10^{-16}$
\\
\addlinespace

\shortstack[l]{Source second circular\\moment}
&
$6.77\times10^{-16}$
\\
\addlinespace

Protected passive readout
&
$7.77\times10^{-16}$
\\
\addlinespace

Source normalization
&
$1.11\times10^{-15}$
\\
\addlinespace

\shortstack[l]{81 classical adaptive\\seed-bin policies}
&
$1.67\times10^{-16}$
\\
\addlinespace

\shortstack[l]{Gaussian partition\\(relative)}
&
$4.44\times10^{-16}$
\\
\addlinespace

\shortstack[l]{Gaussian reset window\\(absolute)}
&
$4.44\times10^{-16}$
\\
\addlinespace

Exponential work integral
&
$1.78\times10^{-15}$
\\
\addlinespace

\shortstack[l]{Quantum Kraus\\completeness}
&
$5.55\times10^{-16}$
\\
\addlinespace

Four-qubit detector circuit
&
$2.22\times10^{-16}$
\\
\addlinespace

\shortstack[l]{Quantum source-angle\\matrix quadrature}
&
$6.38\times10^{-15}$
\\
\addlinespace

\shortstack[l]{Quantum source local\\marginals}
&
$6.44\times10^{-15}$
\\
\addlinespace

Sharp tradeoff equality
&
$7.08\times10^{-16}$
\\
\addlinespace

\shortstack[l]{Contextual zero-coupling\\full density}
&
$1.11\times10^{-15}$
\\
\addlinespace

\shortstack[l]{65,536 adaptive policies:\\earlier marginal}
&
$1.11\times10^{-16}$
\\
\addlinespace

\shortstack[l]{Carrier harmonic\\(field readout)}
&
$4.77\times10^{-16}$
\\

\bottomrule
\end{tabular}
\end{table}

Monte Carlo uncertainty is reported separately from \cref{tab:residuals}. Bernoulli probabilities use pointwise Wilson 95\% intervals; other means and differences use stated standard-error intervals. The classical suite's largest standardized discrepancy is $3.544$ over 942 correlated comparisons. In the 54 quantum plus-probability cells, 48 Wilson intervals contain the target, and the largest absolute null-standardized discrepancy is $2.459$; none exceeds the stated Bonferroni normal threshold $3.312$ for those 54 cells. These comparisons quantify agreement with the model predictions within sampling uncertainty. Exact independence and the domain of calibration follow from the analytical assumptions and results, while the reported pointwise intervals do not provide simultaneous coverage of all quantities. The smallest weak signals are not reliably resolved at the selected sample size, as indicated by the uncertainty intervals in \cref{fig:weak}.

The simulations use NumPy's PCG64 generator with random seeds 2026091701 (classical) and 2026091702 (quantum). The computational environment comprises Python 3.12.14, NumPy 2.3.5, and SciPy 1.17.0. The programs accept model parameters and trial counts, save trial-level records, and generate numerical summaries. A separate plotting program generates the figures from the saved data. Execution commands, software versions, definitions of recorded variables, numerical cross-checks, and file checksums are supplied with the computational materials. The manuscript source and vector figures are provided separately from the simulation data archive.

\subsection{Interpretation and remaining physical requirements}
The source-specific antipodal identities distinguish hidden-angle dependence that cancels in every admitted earlier record from dependence exposed by a clock, seed, or environment. The quantum benchmark shows that the desired pair and continuation statistics are operationally consistent at fixed widths. The sharp probe theorem quantifies the minimum source change required under fixed probe calibration. This minimum remains nonzero at every nonzero coupling, even as the uncorrected signal decreases. The result applies to apparatus extensions satisfying the stated calibration assumptions.

The contextual embedding recovers the isolated hidden law and preserves the declared operational records using a specified quantum probability law. Its hidden conditional measures may depend on future apparatus choices, while its calibration tensor, raw partners, and access rules are model inputs. Direct carrier readout violates record consistency when the fields are treated as noninvasively accessible classical coordinates: the analytical support calculation gives earlier readout probabilities of zero and one for the two future settings.

A microscopic physical realization would require a source-field interaction, an ensemble with specified boundary conditions, forward and reversed preparation procedures, and a mechanism determining the accessible records or their measurement-induced disturbance. The Gaussian history weights, normalized resets, backward effects, and reversible dilations analyzed here constrain aspects of such a realization but do not determine its full dynamics. The present results establish mathematical characterizations and a conditional statistical embedding within the stated apparatus classes. Microscopic implementation and extension to unrestricted multiple-source networks remain open problems.

\paragraph{Author contributions and use of AI tools.}
The author used OpenAI's ChatGPT to assist with implementation of numerical verification code, and manuscript preparation. The author takes responsibility for the final arguments, computations, interpretations, and conclusions.
\clearpage
\onecolumn
\appendix
\section{Supporting classical derivations and implementation constraints}\label{app:classical}
For equal widths, the antipodal sum of \cref{eq:kernel} is
\begin{equation}
 R_A(\lambda)=\frac{\sinh(2g)}{\pi[\cosh(2g)-\cos2(\lambda-x)]}.
\end{equation}
To derive \cref{eq:moment}, put $C=\cosh(2g)$, $d=x-y$, $t=2\lambda-x-y$, and $U_x=C-\cos(t-d)$, $U_y=C-\cos(t+d)$. The unnormalized density $(U_xU_y)^{-1}$ is even in $t$, so $\E\sin t=0$. Convolution gives
\begin{equation}
 J=\int_0^{2\pi}\frac{\dd\lambda}{U_xU_y}=\frac{2\pi C}{\sinh(2g)(C^2-\cos^2d)}.
\end{equation}
Integrating $U_x+U_y=2C-2\cos d\cos t$ against that normalized density yields
\begin{equation}
 2C-2\cos d\E\cos t=\frac{4\pi}{\sinh(2g)J}=\frac{2(C^2-\cos^2d)}C.
\end{equation}
For $\cos d\ne0$, this gives $\E\cos t=\cos d/C$; continuity supplies the remaining case. Multiplying by $e^{i(x+y)}$ gives \cref{eq:moment}.

For an explicit mesoscopic pointer, take energies $E_0=0$, $E_+=E_-=-\Delta$ with $\Delta>0$. For attempt rate $\nu_{\rm att}>\nu\max_c k_c$, choose barriers
\begin{equation}
 B_c=k_BT\log\frac{\nu_{\rm att}}{\nu k_c}>0.
\end{equation}
Activated rates are $r_{0c}=\nu k_c$ and $r_{c0}=\nu k_ce^{-\beta\Delta}$, so $\log(r_{0c}/r_{c0})=\beta(E_0-E_c)$ as required by local detailed balance. With fixed inputs, the probability of a first return within holding time $t_h$ is
\begin{equation}
 p_{\rm return}=1-\exp[-\nu k_c e^{-\beta\Delta}t_h].
\end{equation}
For $\eta=0.2$, $\nu=0.01\,\mathrm{s}^{-1}$, $t_h=1000\,\mathrm{s}$, and $\beta\Delta=12$, the worst value is about $0.0030572$. These parameters define an illustrative mesoscopic model. Finite barriers imply finite retention times, and reversing the transition rates additionally requires a specification of the reversed source, controller, archive, and bath preparation.

For the fixed-past bound, antipodal pairing makes the expectation of an even observable under \cref{eq:conditional} equal to its expectation under $\rho_{xy}$. Thus the expectations of $\cos2\lambda$ in the two $x=0,g=0.2$ target densities differ by $\sech0.4$. For $|f|\leq1$, $|\E_pf-\E_qf|\leq2\TV(p,q)$, with $\TV(p,q)=\tfrac12\int|p-q|$. The triangle inequality yields $\max_y\TV(q,\mu_{0,+,y})\geq\sech0.4/4$. Direct integration gives the stronger numerical pair-distance bound stated in \cref{sec:controller}.

A separate communication constraint applies to a restricted scalar architecture. Let a deterministic message $s=s(x)$ be the only information about $x$ reaching a later correction $m(s,y)>0$. Exact cancellation requires
\begin{equation}
 m(s(x),y)\widehat Z_{xy}=\kappa_x\quad\text{for every }y.
\end{equation}
For axes $\{0,\pi/4,\pi/2,3\pi/4\}$, the rows of $\widehat Z$ are cyclic permutations of $(Z_0,Z_{45},Z_{90},Z_{45})$ with $Z_0>Z_{45}>Z_{90}>0$. Equal messages require proportional rows. Their equal row sums turn proportionality into equality, but their unique maxima lie in different columns. Four messages, or two fixed-length bits, are therefore necessary. Stochastic encoding, shared prior information, outcome-dependent corrections, and other couplings are outside the assumptions of this bound.

\section{Quantum dilations, calibration, and continuity}\label{app:quantum}
An explicit detector circuit uses an input qubit $S$ and blank qubits $E,N,M$, with computational bit zero denoting outcome $+1$. In order, apply $R_y(-x)$ to $S$; swap $S,E$; apply $R_y(\vartheta)$ to $N$, with $\vartheta=2\arcsin\sqrt{(1-v_\gamma)/2}$; apply controlled-NOT gates $E\to S$, $N\to S$, and $S\to M$; then apply $R_y(x)$ to $S$. Here $R_y(\theta)=\exp(-i\theta Y/2)$. For rotated input $\sum_{j=0}^1 c_j\ket j$, $E$ retains $j$ and $N$ has amplitudes $\sqrt{1-p},\sqrt p$, where $p=(1-v_\gamma)/2$. The first two controlled-NOTs set $S=j\oplus n$, and the third copies that bit to $M$. Measuring $M$ and tracing $E,N$ removes different-$j,n$ coherences and yields \cref{eq:instrument}. All gates are unitary; retaining their environments is necessary to invert the circuit.

The Werner source admits the purification
\begin{equation}
 \ket{\Psi_w}=\sum_{j=0}^3\sqrt{p_j}\ket{B_j}_{AB}\ket j_C,\qquad
 p_0=(1+3w)/4,\quad p_1=p_2=p_3=(1-w)/4,
\end{equation}
where $\ket{B_0}=\ket{\psi^-}$ and the remaining states are the other three orthonormal Bell states. Completing this normalized vector to an orthonormal basis supplies an ideal unitary preparation, not a material Hamiltonian or a derivation of retrocausal hidden variables.

The hybrid matrix is explicitly
\begin{equation}
 \Omega_{xy}=\frac{\id_4}4-\frac{1-\Re h_{xy}}8X\otimes X
 -\frac{1+\Re h_{xy}}8Z\otimes Z
 -\frac{\Im h_{xy}}8(X\otimes Z+Z\otimes X).\label{eq:hybridmatrix}
\end{equation}
The first harmonics vanish by antipodality; $\E\sin^2\lambda=(1-\Re h_{xy})/2$, $\E\cos^2\lambda=(1+\Re h_{xy})/2$, and $\E\sin\lambda\cos\lambda=\Im h_{xy}/2$ prove the formula. At the two tested contexts, mixed terms vanish and both remaining products commute with $A=Z\otimes Z$.

For the first weak meter, prepare a blank qubit $M$ in $\ket0$, put $\theta=\arcsin k$, apply $U_\theta=\exp[-i\theta A\otimes Y_M/2]$, and measure $M$ in the $X$ basis with result $r$. Since $A^2=\id_4$,
\begin{equation}
 \bra{r_X}U_\theta\ket0=\frac{\cos(\theta/2)\id_4+r\sin(\theta/2)A}{\sqrt2}=M_r^{(k)}.
\end{equation}
For $0\leq\theta\leq\pi/2$ the operator is positive and its square is $F_r^{(k)}$. The ideal pulse Hamiltonian is $\hbar\dot\theta(t)A\otimes Y_M/2$. Alternatively prepare $M$ in $\cos\alpha\ket0+\sin\alpha\ket1$, where $\alpha=\tfrac12\arccos k$, apply a controlled-NOT from each source qubit into $M$, and measure $M$ computationally. Even and odd parity amplitudes give the same two Kraus operators, with bit zero corresponding to $r=+1$. Neither circuit is a Hamiltonian for the augmented retrocausal fields. The nonselective channel is
\begin{equation}
 \mathcal N_k(\Omega)=\cos^2(\theta/2)\Omega+\sin^2(\theta/2)A\Omega A,
\end{equation}
which leaves the two hybrid states unchanged by \cref{eq:hybridmatrix}.

The calibration-error correction follows from $|\Tr[(G_{+,j}-F_+^{(k)})\sigma_j]|\leq\eta$ for each context. Their probability difference changes by at most $2\eta$, giving the bound in \cref{sec:quantum}. The same trace-distance argument has a classical version for a fixed response $L_k(r\mid\lambda)=[1+rkf(\lambda)]/2$, $|f|\leq1$, with $D$ replaced by total variation. For $f=\cos2\lambda$, the two baseline means differ by $q$, so exact protection requires $\max_j\TV(q_j(k),\rho_j)\geq q/4$. Sharp attainability for those angle densities is not claimed. A quantum spin effect need not implement this classical harmonic response.

For the smooth counterexample, direct subtraction of \cref{eq:smoothresponse} gives
\begin{equation}
 \TV(\widetilde\rho_u,\rho_u)=\frac{k^2|m_u|}{2(1-k^2m_u^2)}\E_{\rho_u}|m_u-f|,
\end{equation}
proving the stated $O(k^2)$ convergence at fixed $u$. A sufficient continuity criterion is the following: on a finite configuration space, nonnegative weights $w_k(\zeta)$ continuous at $k=0$ and $Z_0=\sum_\zeta w_0(\zeta)>0$ imply total-variation convergence of $w_k/Z_k$ to $w_0/Z_0$. In a continuous space, it suffices that $w_k\to w_0$ almost everywhere and $w_k\leq G$ near zero for an integrable $G$, with $Z_0>0$. Dominated convergence gives $L^1$ convergence of weights and $Z_k\to Z_0$; normalization then gives the result. If a fixed map $\tau(\zeta)$ assigns a unit-trace positive matrix to each hidden configuration, then
\begin{equation}
 \sigma_k=\int\tau(\zeta)\mu_k(\dd\zeta),\qquad D(\sigma_k,\sigma_0)\leq\TV(\mu_k,\mu_0).
\end{equation}
This follows by integrating $\|\tau(\zeta)\|_1=1$ against the absolute difference measure. Such regularity, a fixed hybrid state map, and the calibrated meter cannot jointly give exact record protection while continuously recovering the two original states. Singular limits can violate regularity; contextual responses can violate calibration. Locality or time-reversal symmetry alone establishes neither premise.

\section{Environmental, clock, archive, and reset partners}\label{app:partners}
Retain the local isometric environment in \cref{eq:dilation}. A measurement of that environment gives a completely positive refinement $\mathcal J_{a,e}$ of the coarse instrument. Choose a strictly positive normalized raw reference $q(e\mid a)$ and take
\begin{equation}
 W_{a,e}=q(e\mid a)W_a,\qquad\sum_e q(e\mid a)=1.
\end{equation}
At fixed coarse context and unchanged continuation, $R_{h,e}=q(e\mid a)R_h$, $d_{h,e}=q(e\mid a)d_h$, and therefore
\begin{equation}
 \sum_e\frac{n_{h,e}}{d_{h,e}}R_{h,e}=\frac{\sum_e n_{h,e}}{d_h}R_h.
\end{equation}
Coarsening restores the hidden branch only when refined quantum maps sum to the same coarse map and the stated proportionality of raw measures holds. A policy changing subsequent apparatus based on $e$ defines a different experiment; its earlier operational record is still protected by \cref{thm:embedding}.

The numerical refinement mixes the two operators $K_{a,+},K_{a,-}$ by the real unitary rotation
\begin{equation}
 \widetilde K_{a,e}=\sum_s U_{es}K_{a,s},\qquad
 U=\begin{pmatrix}\cos\vartheta&-\sin\vartheta\\\sin\vartheta&\cos\vartheta\end{pmatrix},\qquad\vartheta=0.413.
\end{equation}
The refined maps $\widetilde K_{a,e}(\cdot)\widetilde K_{a,e}^\dagger$ sum to the original instrument; $q(e\mid a)=1/2$ supplies support. This rotation is one explicit finite test. The theorem covers other admitted local environmental instruments when their raw partners satisfy its assumptions.

A normalized clock can multiply both $W$ and $\mathcal J$ by the same scalar density $f(t)=\nu e^{-\nu t}$. That scalar cancels in $C$, and the complete clock-outcome record has the paired quantum law. A finite deadline requires a no-event branch to keep the instrument complete. A hidden-angle-dependent raw clock requires a paired operational instrument satisfying the support condition; an averaged rate alone does not specify such an instrument.

For orthogonal archive, display, and blank registers $H,D,B$, a controlled copy followed by a display--blank swap acts on the ready subspace as
\begin{equation}
 \ket{0,r,0}_{HDB}\longmapsto\ket{r,0,r}_{HDB}.
\end{equation}
Use the same unitary permutation as the raw register map. The display resets, the archive remains, and the environment keeps the displaced value. Further cycles require additional blank registers or a specified erasure operation. More generally, a normalized raw reset and trace-preserving quantum reset satisfy
\begin{equation}
 W_{\rm reset}1=1,\qquad\mathcal J_{\rm reset}^*(\id)=\id.
\end{equation}
At unused terminal outputs they leave incoming fields unchanged. A measured reset environment is retained as a refined instrument. Inversion requires actual correlated outputs, not fresh blanks. These statements specify the register layer, not a reversed source-field dynamics or material bath for the potential in \cref{eq:factor}.

\bibliographystyle{quantum}
\bibliography{references}
\end{document}